\documentclass[11pt,a4paper]{article}
\usepackage{amsmath,amsfonts,amssymb,amsthm}
\usepackage{amsthm}
\usepackage{graphicx,color}
\usepackage{boxedminipage}
\usepackage[ruled,boxed,linesnumbered]{algorithm2e}
\usepackage{framed}
\usepackage{thmtools}
\usepackage{thm-restate}
\usepackage{xspace}
\usepackage{float}
\usepackage{todonotes}
\usepackage{footnote}
\usepackage{mathtools}

\usepackage{vmargin}
\setmarginsrb{1in}{1in}{1in}{1in}{0pt}{0pt}{0pt}{6mm}
\usepackage{todonotes}
\usepackage{footnote}
\usepackage[pdftex, plainpages = false, pdfpagelabels, 
bookmarks=false,
bookmarksopen = true,
bookmarksnumbered = true,
breaklinks = true,
linktocpage,
pagebackref,
colorlinks = true,  
linkcolor = blue,
urlcolor  = blue,
citecolor = red,
anchorcolor = green,
hyperindex = true,
hyperfigures
]{hyperref} 
\usepackage{xifthen}
\usepackage{tabularx}
\usepackage{tikz} 
\usetikzlibrary{calc}
\usepackage{thm-autoref}
\usepackage[nameinlink]{cleveref}

\newtheorem{theorem}{Theorem}
\newtheorem{lemma}{Lemma}

\newtheorem{definition}{Definition}
\newtheorem{observation}{Observation}
\newtheorem{proposition}{Proposition}

\theoremstyle{definition}

\newcommand{\poly}{{\sf poly}}

\DeclareMathOperator{\operatorClassNP}{NP}
\newcommand{\classNP}{\ensuremath{\operatorClassNP}}

\DeclareMathOperator{\operatorClassCoNP}{coNP}
\newcommand{\classCoNP}{\ensuremath{\operatorClassCoNP}}
\DeclareMathOperator{\operatorClassFPT}{FPT\xspace}
\newcommand{\classFPT}{\ensuremath{\operatorClassFPT}\xspace}
\DeclareMathOperator{\operatorClassW}{W}
\newcommand{\classW}[1]{\ensuremath{\operatorClassW[#1]}}


\newlength{\RoundedBoxWidth}
\newsavebox{\GrayRoundedBox}
\newenvironment{GrayBox}[1]%
{\setlength{\RoundedBoxWidth}{.93\textwidth}
	\def\boxheading{#1}
	\begin{lrbox}{\GrayRoundedBox}
		\begin{minipage}{\RoundedBoxWidth}}%
		{   \end{minipage}
	\end{lrbox}
	\begin{center}
		\begin{tikzpicture}%
			\node(Text)[draw=black!20,fill=white,rounded corners,%
			inner sep=2ex,text width=\RoundedBoxWidth]%
			{\usebox{\GrayRoundedBox}};
			\coordinate(x) at (current bounding box.north west);
			\node [draw=white,rectangle,inner sep=3pt,anchor=north west,fill=white] 
			at ($(x)+(6pt,.75em)$) {\boxheading};
		\end{tikzpicture}
\end{center}}

\newenvironment{defproblemx}[2][]{\noindent\ignorespaces%
\FrameSep=6pt%
\parindent=0pt%
\vspace*{-1.5em}
\ifthenelse{\isempty{#1}}{%
\begin{GrayBox}{\textsc{#2}}%
}{%
	\begin{GrayBox}{\textsc{#2} parameterized by~{#1}}%
	}
	\begin{tabular*}{\textwidth}{@{\hspace{.1em}} >{\itshape} p{1.8cm} p{0.8\textwidth} @{}}%
	}{
	\end{tabular*}%
\end{GrayBox}%
\ignorespacesafterend
}


%

\newcommand{\defproblema}[3]{
\begin{defproblemx}{#1}
	Input:  & #2 \\
	Task: & #3
\end{defproblemx}
}%

\newcommand{\Oh}{\mathcal{O}}

\newcommand{\lr}[1]{\left( #1\right)}

\newcommand{\cI}{\mathcal{I}}
\newcommand{\cP}{\mathcal{P}}
\newcommand{\cS}{\mathcal{S}}

\newcommand{\pname}{\textsc}
\newcommand{\ProblemFormat}[1]{\pname{#1}}
\newcommand{\ProblemIndex}[1]{\index{problem!\ProblemFormat{#1}}}
\newcommand{\ProblemName}[1]{\ProblemFormat{#1}\ProblemIndex{#1}{}\xspace}

\newif\iflong
\longtrue 


\newcommand{\probCPack}{\ProblemName{Disk Appending}}
\newcommand{\probRearr}{\ProblemName{Disk Dispersal}}
\newcommand{\probRRearr}{\ProblemName{Rectilinear Disk Dispersal}}
\newcommand{\probGridTiling}{\ProblemName{Grid Tiling}}

\title{Kernelization for Spreading Points
	\thanks{The research leading to these results has received funding from the Research Council of Norway via the project  BWCA (grant no. 314528), the European Research Council (ERC) via grant LOPPRE, reference 819416, and the Israel Science Foundation (ISF) grant no.~1176/18.}
}

%

\author{
	Fedor V. Fomin\thanks{
		Department of Informatics, University of Bergen, Norway.}
	\and
	Petr A. Golovach\addtocounter{footnote}{-1}\footnotemark{}
	\and
	Tanmay Inamdar\addtocounter{footnote}{-1}\footnotemark{}
	\and
	Saket Saurabh\addtocounter{footnote}{-1}\footnotemark{} \thanks{The Institute of Mathematical Science, HBNI, Chennai, India}
	\and
	Meirav Zehavi~\thanks{Ben-Gurion University of the Negev, Beer-Sheva, Israel} 
}
\date{}

\begin{document}

\maketitle

\begin{abstract}
We consider the following problem about dispersing points. Given a set of points in the plane, the task is to identify whether by moving a small number of points by small distance, we can obtain an arrangement of points such that no pair of points is ``close'' to each other. More precisely, for a family of $n$ points,  an integer $k$, and a real number $d>0$, we ask whether at most $k$ points could be relocated, each point at distance at most $d$ from its original location, such that the distance between each pair of points is at least a fixed constant, say $1$. A number of approximation algorithms for variants of this problem, under different names like distant representatives, disk dispersing, or point spreading, are known in the literature. However, to the best of our knowledge,  the parameterized complexity of this problem remains widely unexplored. We make the first step in this direction by providing a kernelization algorithm that, in polynomial time, produces an equivalent instance with $ \Oh(d^2k^3)$ points. As a byproduct of this result, we also design a non-trivial fixed-parameter tractable (FPT) algorithm for the problem, parameterized by $k$ and $d$. Finally, we complement the result about polynomial kernelization by showing a lower bound that rules out the existence of a kernel whose size is polynomial in $k$ alone, unless $ \classNP\subseteq \classCoNP/\poly$. 

\end{abstract}

\section{Introduction}\label{sec:intro}
The problem of dispersing a family of objects is a common theme in many situations in computational geometry. It appears naturally in the wide range of settings that require assigning elements to locations. In many scenarios, dispersing has two often contradicting objectives. On the one hand, it is desirable not to place the objects too close to each other. This can be due to a variety of reasons, e.g., placing customers in a restaurant in socially distant manner, to placing wireless sensors far from each other in order to avoid interference. On the other hand, we may already have an existing placement of the objects, and wish to optimize the resources spent on moving the objects. 

With this motivation, we consider the following mathematical model of the dispersing problems. In this model, our aim is to modify a given arrangement of points in the plane, by moving some of the points into new positions within a given distance, such that the Euclidean distance between each pair of points in the final arrangement is at least a fixed constant, say $2$. Equivalently, the problem can be reformulated in terms of finding a non-overlapping arrangement of unit disks, formulated below as the problem \probRearr. 

\defproblema{\probRearr}%
{A family $\cS$ of $n$ unit disks, an integer $k\geq 0$, and a real $d\geq 0$.}%
{Decide whether it is possible to obtain  from $\cS$ a family of non-overlapping unit disks $\cP$ by moving at most $k$ disks into new positions in such a way that each unit disk is moved a distance at most $d$. \footnotemark
}
\footnotetext{All (unit) disks considered in the paper are open unless specified otherwise. In particular, two unit disks touching each other are not considered to be overlapping. Due to this simplifying assumption, we avoid the discussion about placing disks such that the distance between their boundaries is infinitesimally small.}

\probRearr---and therefore, the problem of spreading points---is closely related to the problem of finding a system of \emph{$q$-distant representatives}. This problem was introduced by Fiala, Kratochv{\'{\i}}l, and  Proskurowski
\cite{FialaKP05} as a geometric extension of the  classic combinatorial notion of the ‘‘systems of distinct representatives’’. For a set of geometric objects in a metric space and a number $q>0$, the task is to choose one representative point from each object such that the selected points are at a distance at least $q$  from each other. For $k=n$, an instance $({\cal S},d,k)$ of \probRearr\ can be viewed as an instance of the problem of finding  a system of  $q$-distance representatives by setting $q=2$ and defining the set of geometric objects as follows: for each disk $D\in{\cal S}$, create a disk with the same center but with radius $d$ (instead of $1$). 
This yields that \probRearr is also \classNP-hard for $d=2$ from the result of \cite{FialaKP05}.

The problem  of computing the distant representatives  has applications in map labeling and data vizualization, where the goal is to   place  labels as close as possible to the specified features of the map but avoiding overlapping  (thus the centers of labels are the centers of non-intersecting disks, ensuring that they are sufficiently separated) \cite{DoddiMMMZ97,Jiang06a,JiangQQZC03}.
The problem is also related to 
problems of ``imprecise points'' \cite{LofflerK10,SheikhiMBM17}, the settings  where
locations of points are given with some precision. Approximation algorithms for this and related point spreading problems---where the goal is to place the specified number of points within a certain region so as to maximize the smallest pairwise distance between the points---were developed in  
\cite{biedlL21,Cabello07,DemaineHMSGZ09,DumitrescuJ10,DumitrescuJ11,DumitrescuJ12,fekete2004maximum, hsiang2004algorithms,baur2001approximation, hsiang2003online}.  


\begin{figure} 
	\begin{center}
		\includegraphics[scale=2]{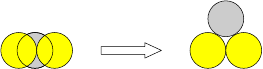}%
		\caption{An example of \probRearr with $k=1$ and $d=\sqrt{3}$.  A non-overlapping arrangement of disks obtained from a family of three disks by moving the central disk at distance $\sqrt{3}$.}\label{fig:introex2}
	\end{center}
\end{figure}

To the best of our knowledge,  the parameterized complexity of dispersal problems are widely unexplored. The notable exception is the work of 
Demaine, Hajiaghayi, and  Marx
\cite{DemaineHM14} on dispersion in graphs. In this problem, we are given an underlying edge-weighted graph, called the \emph{connectivity graph} $G$, and a set of $k$ ``agents'' or ``pebbles'', located at a subset of vertices $G$. The task is to move the pebbles to distinct vertices and such that no two pebbles are adjacent.  The movement problem is  \classW1-hard parameterized by the number of pebbles, even in the case when each pebble is allowed to move at most one step.

\subsection{Our Results}
Our first result concerns kernelization (polynomial compression) of \probRearr. Informally speaking, in parameterized complexity, the polynomial kernel is a polynomial-time algorithm that compresses the instance of a parameterized problem to the instance whose size is bounded by a polynomial of the parameter. \Cref{thm:partialKernel}  gives an algorithm that runs in polynomial time, and reduces the number of disks to some polynomial of $d$ and $k$. 



\begin{restatable}{theorem}{themainkernel}
	\label{thm:partialKernel} 
	There is a polynomial-time algorithm that, given an instance $(\cS,k,d)$ of  \probRearr, outputs an equivalent instance $(\cS',k,d)$ of the same problem, where   the number of unit disks is $|\cS'|=\Oh((d+1)^2k^3)$, and $\cS'\subseteq \cS$.
\end{restatable}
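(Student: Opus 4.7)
The plan is to construct $\cS'$ by deleting input disks that lie far from any overlap and to bound the remaining count by a geometric packing estimate. First I will analyze the conflict graph $G$ on $\cS$, whose edges join pairs of overlapping input disks. Since the non-moved disks of any solution form an independent set of $G$, the moved disks must contain a vertex cover of $G$, so in particular the vertex cover number of $G$ is at most $k$. Computing a maximal matching $M$ of $G$ in polynomial time, if $|M|>k$ I output a trivial no-instance, and otherwise $C:=V(M)$ is a vertex cover of size at most $2k$ with $\cS\setminus C$ an independent set of $G$.

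The next step is to bound the set $A$ of \emph{active} disks, i.e., those incident to some edge of $G$. Every active disk is in $C$ or $G$-adjacent to a disk of $C$, and the $G$-neighbors of a fixed $D\in C$ lying outside $C$ are pairwise non-overlapping unit disks whose centers lie in the open ball of radius $2$ about $D$'s center; a planar packing estimate bounds their number by a constant. Hence $|A|=\Oh(k)$. The kernel $\cS'$ then consists of every input disk whose center lies within distance $R$ of some disk of $A$, where $R=\Theta((d+1)k)$. The choice of $R$ is dictated by a chain-of-moves argument: a move may dislodge another disk, forcing a further move, which may force yet another, and so on; a chain of at most $k$ forced moves, each link shifting positions by at most $d+2$, can propagate the initial movement's footprint over a distance of $\Oh((d+1)k)$ from its originating active disk.

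To establish that $(\cS,k,d)$ and $(\cS',k,d)$ are equivalent, the forward direction is immediate by restricting any solution to $\cS'$. The reverse direction is the main conceptual obstacle: given a solution for $(\cS',k,d)$, I extend it to $(\cS,k,d)$ by placing every disk of $\cS\setminus\cS'$ at its original position. One must verify that no overlap is introduced, and this is done by showing that no moved disk in the kernel solution can have its new position within distance $2$ of a removed disk, for otherwise one could trace back the chain of forced movements to an originating active disk, producing an active disk within distance less than $R$ of a removed disk and contradicting the definition of $\cS'$.

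Finally, I bound $|\cS'|$ by partitioning it into active and inactive disks. The active ones number at most $|A|=\Oh(k)$. Every inactive disk has no conflicts in $\cS$, so the set of inactive disks is pairwise non-overlapping, meaning their centers are pairwise at distance at least $2$. Since each inactive disk in $\cS'$ lies within distance $R$ of one of the $\Oh(k)$ active centers, the standard planar packing bound in a disk of radius $R$ contributes at most $\Oh(R^2)$ such centers per active disk. Summing gives $|\cS'|\le \Oh(k)+\Oh(k)\cdot\Oh(R^2)=\Oh((d+1)^2 k^3)$, matching the claim.
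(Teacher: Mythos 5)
Your proof is correct and follows essentially the same route as the paper: compute a size-$\Oh(k)$ vertex cover of the conflict graph, show that any minimum solution only moves disks within distance $\Theta((d+1)k)$ of the cover, and finish with a packing bound to get $\Oh((d+1)^2k^3)$. The paper formalizes your chain-of-moves propagation bound as a clean induction on $k$ (\Cref{lem:limitEffectShifting}, replacing the chosen moved disk $Y$ by its image $Y'$ and decrementing the budget), but the underlying idea---that each forced move shifts the footprint by at most $d+2$, and there are at most $k$ such links---is the same.
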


Strictly speaking, the algorithm in \Cref{thm:partialKernel} is not a polynomial kernel according to the standard definition of this notion---we do not guarantee that the coordinates of disks, and thus the overall size of the compressed instance, is bounded by a polynomial in $k$ and $d$. We call such a compression algorithm a \emph{partial kernel}. Further, we observe in \Cref{thm:kern-denom} that the partial kernel from \Cref{thm:partialKernel} can be modified to be a polynomial kernel if the centers of input disks are constrained to be rationals and we parameterize the problem by $k$, $d$, and the maximum denominator of coordinates of centers. 

For a parameterized problem, given the existence of a (partial) kernel, it is usually straightforward to design a fixed-paramter tractable (FPT) algorithm by an exhaustive enumeration of all candidate solutions. For \probRearr, however, this is not entirely obvious. After computing an equivalent reduced instance by applying \Cref{thm:partialKernel}, one can enumerate all possible subsets of at most $k$ unit disks that are to be moved. Now, for each such subset, we want to decide whether each unit disk in the subset can be moved by a distance of at most $d$ that results in a non-overlapping configuration. Since there are infinitely many possible target locations for each unit disk, this step requires some additional work. We show that this decision subroutine can be reduced to checking whether a system of polynomial inequalities has a solution over real numbers, which can then be determined in FPT time by using classical results from computational real algebra. Thus, we obtain the following non-trivial corollary.

\begin{restatable}{corollary}{themainFPT} \label{cor:fpt}
	\probRearr\ is \classFPT when parameterized by $d+k$. Specifically,
	it is solvable in time $(dk)^{\Oh(k)} \cdot |I|^{\Oh(1)}$.
	\label{mainFPT}
\end{restatable}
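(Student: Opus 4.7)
The plan is to combine the partial kernel from \Cref{thm:partialKernel} with an exhaustive enumeration of which disks to relocate, and to resolve each enumerated case via a decision procedure for the existential theory of the reals.

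First, I would invoke \Cref{thm:partialKernel} to reduce, in polynomial time, to an equivalent instance $(\cS',k,d)$ with $n := |\cS'|=\Oh((d+1)^2k^3)=(dk)^{\Oh(1)}$. Next, I would enumerate every subset $T\subseteq \cS'$ with $|T|\le k$; this set is intended to capture the disks that are actually repositioned in a hypothetical solution. There are $\sum_{i=0}^{k}\binom{n}{i}=n^{\Oh(k)}=(dk)^{\Oh(k)}$ such guesses. For a fixed $T$, first verify in polynomial time that the disks in $\cS'\setminus T$ are pairwise non-overlapping; if they are not, discard this guess, since those disks are required to remain in place.

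The core subroutine is to decide, for a fixed $T$, whether the disks in $T$ admit valid new centers. For each $D_i\in T$ with original center $(a_i,b_i)$, introduce real variables $x_i,y_i$ for its new center, and collect the following quadratic constraints: $(x_i-a_i)^2+(y_i-b_i)^2 \le d^2$ for every $D_i\in T$ (movement bound); $(x_i-x_j)^2+(y_i-y_j)^2 \ge 4$ for every pair $\{D_i,D_j\}\subseteq T$ (non-overlap within $T$); and $(x_i-a_\ell)^2+(y_i-b_\ell)^2 \ge 4$ for every $D_i\in T$ and every $D_\ell\in \cS'\setminus T$ (non-overlap between $T$ and the fixed disks). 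This amounts to an existential theory of the reals sentence in at most $2k$ variables involving $\Oh(nk)$ polynomial inequalities of degree at most $2$, with coefficients derived from the input coordinates. By a classical algorithm for the existential theory of the reals (e.g.\ Renegar, or Basu--Pollack--Roy), such a sentence can be decided in time $(nk)^{\Oh(k)}\cdot |I|^{\Oh(1)}=(dk)^{\Oh(k)}\cdot |I|^{\Oh(1)}$, where the factor $|I|^{\Oh(1)}$ absorbs the bit-complexity of the coefficients. Multiplying the number of guesses by the cost per guess yields the claimed running time.

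The main obstacle is conceptual rather than computational: the space of admissible new positions for the moved disks is continuous and unbounded, so naive enumeration is impossible. The decisive observation is that feasibility is cut out by only $\Oh(nk)$ polynomial inequalities of fixed degree $2$ in only $2k$ variables, matching precisely the regime in which an existential theory of the reals solver runs in singly-exponential time in the number of variables and polynomially in everything else; this turns the continuous feasibility check into an FPT subroutine and, together with the kernel, delivers \Cref{cor:fpt}.
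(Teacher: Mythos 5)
Your proposal is correct and matches the paper's proof essentially step for step: apply the partial kernel of \Cref{thm:partialKernel}, enumerate all $\binom{|\cS'|}{\le k} = (dk)^{\Oh(k)}$ candidate sets of disks to move, and for each candidate set decide feasibility by an existential-theory-of-the-reals solver (the paper uses \Cref{prop:polyequations}, i.e.\ Basu--Pollack--Roy) applied to exactly the same three families of degree-$2$ constraints in $2k$ variables, including the same preliminary check that the unmoved disks already form a packing.
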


Our next result is a companion lower bound to the partial kernelization of  \Cref{thm:partialKernel}, which shows that one cannot remove the dependence on $d$ from the kernel size.

\begin{restatable}{theorem}{theoremnopoly}
	\label{thm:nopolyker}
	\probRearr parameterized by $k$ does not admit a polynomial kernel unless $\classCoNP \subseteq \classNP/\poly$. This result holds even if the distance $d$ is an integer, and the centers of the given disks have rational coordinates.
\end{restatable}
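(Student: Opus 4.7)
The plan is to apply the OR-cross-composition framework of Bodlaender, Jansen, and Kratsch. Recall that an OR-cross-composition of an \classNP-hard problem $L$ into a parameterized problem $(Q,\kappa)$ is a polynomial-time algorithm that, given $t$ instances of $L$ each of size at most $n$ and all in the same class of an appropriate polynomial equivalence relation, produces a single instance of $Q$ that is a YES-instance iff some input instance is a YES-instance, and whose parameter is polynomially bounded in $n+\log t$. The existence of such a composition rules out polynomial kernels for $(Q,\kappa)$ unless $\classCoNP\subseteq\classNP/\poly$. For the source problem I would take \probRearr itself restricted to integer $d$ and rational coordinates, which is \classNP-hard via the reduction from distant representatives noted in \Cref{sec:intro}; the equivalence relation will group instances agreeing on $n$, $k_0$, $d_0$, and a common denominator bound on coordinates.

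Given $t$ equivalent source instances $(\cS_1,k_0,d_0),\ldots,(\cS_t,k_0,d_0)$, the composed instance $(\cS^*,k^*,d^*)$ is assembled by placing $t$ \emph{copy gadgets} $G_1,\ldots,G_t$ far apart in the plane, together with a single \emph{selector disk} $S$ at a central hub. The hub is surrounded by a carefully packed arrangement of \emph{blocker disks} at fixed rational positions so that no unit disk (in particular $S$) can be placed anywhere within distance $d^*$ of the hub except at $t$ designated slot positions, reached via narrow corridors carved through the blockers to each copy gadget; an additional forcing configuration at the hub obliges $S$ to leave its starting position. Each $G_i$ is designed so that (a) with $S$ absent, the disks of $G_i$ are non-overlapping by default and need no moves, while (b) with $S$ occupying the slot inside $G_i$, the combined configuration is geometrically equivalent to the source instance $\cS_i$ and thus admits a non-overlapping rearrangement by moving at most $k_0$ disks iff $\cS_i$ is a YES-instance. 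Setting $k^*=k_0+\Oh(1)$ yields the OR-cross-composition: the composed instance is YES iff some $\cS_i$ is YES, and $k^*$ is polynomial in $n+\log t$.

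The main obstacle is the joint geometric realization of (a) and (b) using unit disks with rational coordinates. For (b), I would embed each $\cS_i$ inside $G_i$ with a small rational inward slack, so that each of $\cS_i$'s disks is displaced just enough to make the default configuration non-overlapping; the slot position is chosen so that placing $S$ there reclaims precisely this slack, pushing the displaced disks back into the $\cS_i$ configuration and thereby reproducing $\cS_i$'s overlaps. For the blocker arrangement, a rational hexagonal-like packing around the hub, combined with rational orientations for each corridor, ensures that $S$'s only admissible destinations are the $t$ slot centers, while keeping all coordinates rational and $d^*$ an integer. Once this construction is verified to satisfy the above properties in polynomial time, the theorem follows from the OR-cross-composition framework.
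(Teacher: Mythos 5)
Your high-level framing matches the paper's: both arguments are OR-cross-compositions in which a central group of disks is forced to move, can only reach one of $t$ isolated ``instance gadgets,'' and thereby activates a single source instance. However, the concrete mechanism you propose has a serious gap, and the paper sidesteps it by choosing a different source problem.

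The paper does not cross-compose \probRearr\ from itself. It cross-composes from \probCPack\ (``Disk Appending''), shown \classNP-hard by Fomin et al.: given a packing $\cP$ of unit disks in a rectangle $R$ and an integer $\kappa$, decide whether $\kappa$ new unit disks can be added to $\cP$ while keeping it a packing. This choice is the crux. The composed instance places the rectangles $R_1,\dots,R_t$ along a line, a small gadget $G_i$ above each $R_i$ containing $\kappa$ ``interesting disks,'' and a set $C$ of $\kappa+2$ co-located disks at a central hub, with the values of $d$, the horizontal gap $s$, and the vertical offset $h$ tuned so that exactly four distance inequalities hold (Claim 5.2): $C$ can reach any $G_i$ but no $R_i$ directly; each $G_i$ can reach its own $R_i$ but no other $R_j$. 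Then the forced cascade is clean and purely combinatorial: $\kappa+1$ disks of $C$ must move, they can only land in some single $G_i$, which displaces its $\kappa$ interesting disks, which can only land inside $R_i$ — and landing $\kappa$ extra disks in the existing packing $R_i$ is \emph{exactly} the \probCPack\ question. The budget $k^*=2\kappa+1$ falls out immediately, and $k^*$ depends only on $\kappa$, not on $t$.

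Your proposal instead tries to use \probRearr\ itself as the source and to make each $G_i$ ``geometrically equivalent to $\cS_i$ once $S$ arrives'' by embedding $\cS_i$ with small rational slack so that the default copy is non-overlapping, then letting the selector disk ``reclaim the slack.'' This is where the argument breaks. If the disks of $G_i$ are perturbed to be pairwise disjoint, then the set of valid $\le d$-moves in the perturbed configuration is not the same as in $\cS_i$: a target position that resolves $\cS_i$'s overlaps need not be reachable (or even collision-free) in the perturbed copy, and conversely. Placing $S$ at one slot creates overlaps only with the disks adjacent to that slot — it does not globally ``push the displaced disks back into $\cS_i$'s configuration'' — so there is no reason the resulting sub-instance should be solvable with $k_0$ moves iff $\cS_i$ is. You would need a separate, nontrivial argument establishing a bijection between solutions of the perturbed-plus-$S$ instance and solutions of $\cS_i$, and it is far from clear such a bijection exists for a generic slack. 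The paper avoids this entirely: because the source problem is about \emph{adding} disks to a packing that is otherwise untouched, the correspondence between ``interesting disks land inside $R_i$'' and ``$(R_i,\cP_i,\kappa)$ is a yes-instance'' is immediate, with no geometric perturbation or re-derivation of a solution correspondence. (A smaller point: your ``corridors carved through blockers'' framing is off — in \probRearr\ a move is a teleport, not a continuous path, so only the final position matters; the paper's isolation is achieved by distance-threshold inequalities, not corridors.)
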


As we already mentioned, by the result of Fiala, Kratochv{\'{\i}}l, and  Proskurowski about {$q$-distant representatives}, \probRearr is  \classNP-hard for $d=2$. Thus the problem is in the class para-NP for parameter $d$. However, the complexity of parameterization by $k$ is more interesting, which remains open. However, in \Cref{sec:rearr-w-hard}, we make a preliminary progress on this question, by showing that the rectilinear version of \probRearr, called \probRRearr, is indeed \classW{1}-hard parameterized by $k$. This problem is defined as follows. 

\defproblema{\probRRearr}%
{A family $\cS$ of $n$ unit disks, an integer $k\geq 0$, and a real $d\geq 0$.}%
{Decide whether it is possible to obtain  from $\cS$ a family of non-overlapping disks  $\cP$ by moving at most $k$ disks into new positions parallel to the axes in such a way that each disk is moved at distance at most $d$.}

More formally, we show the following result regarding \probRRearr.

\begin{restatable}{theorem}{themaiWhards}
	\label{thm:w1hardness}
	\probRRearr
	is $\classW{1}$-hard parameterized by $k$, number of disks that are allowed to be moved. This result holds even if centers of the given disks lie on an integer grid, and the value of $d$, the maximum distance by which a disk can be moved, is an integer.\label{mainWh}
\end{restatable}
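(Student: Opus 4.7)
The plan is to reduce from \probGridTiling, which is \classW{1}-hard when parameterized by the grid dimension $k$. Recall that an instance of \probGridTiling consists of a $k \times k$ array of nonempty sets $S_{i,j} \subseteq [N] \times [N]$; one must pick $(a_{i,j}, b_{i,j}) \in S_{i,j}$ so that $a_{i,j} = a_{i,j+1}$ (horizontal agreement) and $b_{i,j} = b_{i+1,j}$ (vertical agreement). I will build a \probRRearr instance with movement budget $k' = k^2$ in which feasible rectilinear dispersals correspond bijectively to Grid Tiling solutions, thereby transferring \classW{1}-hardness via the relation $k = \sqrt{k'}$.

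The construction would consist of a $k \times k$ array of \emph{cell gadgets}, placed on an integer grid with step size much larger than $d$ to prevent inter-cell interactions. Each cell $(i,j)$ would house one \emph{representative} disk $D_{i,j}$, initially overlapping with a ring of axis-aligned blocker disks arranged so that (i)~$D_{i,j}$ must be moved to resolve the overlap and (ii)~the only admissible target positions for $D_{i,j}$ within the cell form a discrete $N \times N$ lattice of offsets, with offset $(a, b)$ unblocked precisely when $(a,b) \in S_{i,j}$. On top of this, \emph{link gadgets} of fixed disks are placed between horizontally adjacent cells so that the horizontal offsets of $D_{i,j}$ and $D_{i,j+1}$ must coincide, and symmetrically between vertically adjacent cells to force the vertical offsets to coincide. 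The blockers and link disks would be packed tightly enough that relocating any one of them triggers cascading overlaps unfixable within budget $k^2$, so any feasible solution moves exactly the $k^2$ representatives.

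The forward direction of correctness is immediate: given a Grid Tiling selection, displacing each $D_{i,j}$ by $(a_{i,j}, b_{i,j})$ produces a non-overlapping configuration. The main obstacle is the converse, that every feasible dispersal yields a valid Grid Tiling selection. This requires an exchange argument showing that no feasible solution can afford to move a non-representative disk, followed by a geometric analysis in which the blocker lattice forces each representative offset to land at a grid point of $S_{i,j}$ and the link geometry forces the row/column agreements. The rectilinear restriction on movement actually simplifies this analysis, since it rules out diagonal trajectories that could otherwise slip past a blocker. Because the construction uses integer coordinates of magnitude $\poly(k, N)$ and integer $d$, this is a parameter-preserving polynomial-time reduction, yielding \classW{1}-hardness of \probRRearr parameterized by $k$.
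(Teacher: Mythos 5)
Your high-level strategy—a reduction from \probGridTiling with a $\kappa\times\kappa$ arrangement of cell gadgets—matches the paper's starting point, and the overall correctness structure (easy forward direction, harder converse showing every feasible dispersal decodes to a tiling) is the right shape. But there is a concrete, fatal flaw in the cell-gadget mechanism you propose, which the paper's construction is specifically designed to avoid.

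You want a single representative disk $D_{i,j}$ per cell, with the chosen pair $(a,b)\in S_{i,j}$ encoded by relocating $D_{i,j}$ to a lattice offset $(a,b)$; you write explicitly that ``displacing each $D_{i,j}$ by $(a_{i,j}, b_{i,j})$ produces a non-overlapping configuration.'' In \probRRearr, however, a disk that is moved must be displaced along a line parallel to one of the two axes. A single disk can therefore encode at most \emph{one} of the two coordinates of the chosen pair: it can move to $(x+a, y)$ or $(x, y+b)$, but not to $(x+a, y+b)$ unless $a=0$ or $b=0$. So with budget $k'=\kappa^2$ and exactly one moved disk per cell, the construction cannot even express a general Grid Tiling selection in the forward direction, let alone enforce one in the converse. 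Your remark that the rectilinear restriction ``actually simplifies this analysis'' misreads the constraint: it does not merely prune diagonal escape trajectories, it prevents any one disk from encoding a two-dimensional choice.

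This is precisely the obstacle that drives the paper to a very different gadgetry. Instead of one representative disk whose final position encodes a pair, the paper encodes the row choice and the column choice by \emph{where a cascading flow of disks passes through} a cell: row gadgets on the left initiate a horizontal flow, column gadgets on the top initiate a vertical flow, each cell contains a sub-grid of pair gadgets (one per $(a,b)\in S_{i,j}$) through which both flows must pass, and emptying row/column gadgets on the right and bottom absorb the surplus. Every individual movement in this cascade is strictly horizontal or strictly vertical, so the rectilinear constraint is respected; consistency of the row choice across a row and of the column choice across a column emerges from the alignment of the pair gadgets. The price is a budget of $\Theta(\kappa^3)$ moved disks rather than $\kappa^2$, and a carefully balanced ``equal displacement'' counting argument in the converse. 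To salvage your approach you would need, at minimum, two representatives per cell (one moving horizontally, one vertically) plus a mechanism forcing their choices to belong jointly to $S_{i,j}$; at that point you are essentially rebuilding the paper's flow construction.
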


\paragraph*{Organization.} In \Cref{sec:prelim} we introduce basic notions. In \Cref{sec:rearr-kern}, we consider kernelization for \probRearr. Further, we give complexity lower bounds. In  \Cref{sec:rearr-nokern}, we show that it is unlikely that \probRearr admits a polynomial kernel when parameterized by $k$ only. In \Cref{sec:rearr-w-hard}, we show that \probRRearr is $\classW{1}$-hard parameterized by $k$, proving \Cref{thm:w1hardness}. Finally, in \Cref{sec:concl}, we provide some concluding remarks and future directions.

\section{Preliminaries}\label{sec:prelim} 

As it is common in computational geometry, we assume the \emph{real RAM} computational model, that is, we are working with real numbers and assume that basic operations can be executed in unit time. 

\paragraph*{Disks and Segments.} For two points $A$ and $B$ in the plane, we use $AB$ to denote the line segment with endpoints at $A$ and $B$. The \emph{distance} between 
$A=(x_1,y_1)$ and $B=(x_2,y_2)$  or the \emph{length} of $AB$, is $|AB|=\|A-B\|_2=\sqrt{(x_1-x_2)^2+(y_1-y_2)^2}$. 
The \emph{(open unit) disk} with a \emph{center} $C=(c_1,c_2)$ in the plane is the set of points $(x,y)$ satisfying the inequality $(x-c_1)^2+(y-c_2)^2<1$.  
Whenever we write ``disk'' we mean an \emph{open unit disk}, unless radius or closed-ness is specified explicitly. 
Clearly, two disks with centers $A$ and $B$ are disjoint if and only if the distance between $A$ and $B$ is at least two.  We say that the disks \emph{touch} if $|AB|=2$.
For real numbers $a
\leq b$, we use $[a,b]=\{x\in\mathbb{R}\mid a\leq x\leq b\}$ to denote a \emph{closed interval}. For $a_1\leq b_1$ and $a_2\leq b_2$, $[a_1,b_1]\times [a_2,b_2]=\{(x,y)\in\mathbb{R}^2\mid a_1\leq x\leq b_1\text{ and }a_2\leq y\leq b_2\}$. 
A point $X$ is \emph{properly inside} of a polygon $P$ if it is inside $P$ but $X$ is not on the boundary; if we say that $X$ is inside $P$, we allow it to be on the boundary.    
A disk is \emph{(properly) iniside} of a polygon $ P$ if every point of the disk is (properly) inside of $P$.  

\paragraph*{Graphs.} We use standard graph-theoretic terminology and refer to the textbook of Diestel~\cite{Diestel12} for definitions of standard notions. Let $\cS$ be a set of geometric objects in the plane (i.e., non-empty subsets of $\mathbb{R}^2$). Then, it is possible to define an intersection graph $G(\cS)$ as follows: $G(\cS)$ contains a unique vertex corresponding to every object in $\cS$, and there is an edge between the two vertices iff the corresponding two objects in $\cS$ have a non-empty intersection. 
\emph{Unit disk graphs} are the intersection graphs of unit disks in the plane. Note that, given a family $\cS$ of unit disks, we can construct the corresponding unit disk graph $G(\cS)$ in quadratic time. 

\paragraph{Parameterized Complexity.} We refer to the standard textbooks (\cite{CyganFKLMPPS15,FominLSZ19}) for introduction to the area and formal definitions. Here, we only give a brief overview. Let $(\cI, k)$ be an instance of a decision problem $\Pi$, where $k$ is a non-negative integer. We say that $\Pi$ is \emph{fixed-parameter tractable} by $k$, if there exists an algorithm that can decide whether $\cI$ is a yes-instance of $\Pi$ in time $f(k) \cdot |\cI|^{\Oh(1)}$ for some computable function $f$, where $|\cI|$ denotes the size of the instance $\cI$. A common way to show that it is unlikely that a parameterized problem is in \classFPT, one can prove that it is \classW{1}-hard by demonstrating a  \emph{parameterized reduction} from a known \classW{1}-hard problem; we refer to \cite{CyganFKLMPPS15} for the formal definitions of the class \classW{1} and parameterized reductions. 

A \emph{kernelization} (or \emph{kernel}) for $\Pi$ is a polynomial time algorithm that, given an instance $(\cI,k)$ of $\Pi$, outputs an equivalent instance $(\cI',k')$ of $\Pi$ such that $|\cI'|+k'\leq g(k)$ for a computable function $g$.  A kernel is \emph{polynomial} if $g$ is a polynomial. It can be shown that every decidable \classFPT problem admits a kernel. However, it is unlikely that all \classFPT problems have polynomial kernels. In particular, there is the now standard \emph{cross-composition} technique to show that a parameterized problem does not admit a polynomial kernel unless $\classNP\subseteq \classCoNP/\poly$.

\paragraph*{Systems of Polynomial Inequalities.} In our FPT algorithm, we will need to find suitable locations for new disks that need to be added such that the locations are ``compatible'' with an existing arrangement of disks. We will achieve this by solving systems of polynomial inequalities. We use the following result.
\begin{proposition}[Theorem 13.13 in \cite{basu06}] \label{prop:polyequations}
	Let $R$ be a real closed field, and let $\mathcal{P} \subseteq R[X_1, \ldots, X_k]$ be a finite set of $s$ polynomials, each of degree at most $c$, and let \[(\exists X_1) (\exists X_2) \ldots (\exists X_k) F(X_1, X_2, \ldots, X_k)\]
	be a sentence, where $F(X_1, \ldots, X_k)$ is a quantifier-free boolean formula involving $\mathcal{P}$-atoms of type $P \odot 0$, where $\odot \in \{ =, \neq, > , < \}$, and $P$ is a polynomial in $\mathcal{P}$. Then, there exists an algorithm to decide the truth of the sentence with complexity $s^{k+1} c^{\Oh(k)}$ in $D$,\footnote{That is, the algorithm performs $s^{k+1} c^{\Oh(k)}$ operations in $D$.} where $D$ is the ring generated by the coefficients of the polynomials in $\mathcal{P}$. 
\end{proposition}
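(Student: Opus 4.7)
The plan is to sketch the proof of \cite[Theorem 13.13]{basu06}, following the \emph{critical points method} of real algebraic geometry. The truth value of $F(X_1,\ldots,X_k)$ at a point of $R^k$ depends only on the sign vector of the polynomials in $\mathcal{P}$ at that point, so it suffices to produce at least one sample point from each realizable sign condition of $\mathcal{P}$ over $R^k$ and evaluate $F$ on each sample; the existential sentence is true iff some evaluation returns \true. The whole task therefore reduces to sampling realizable sign conditions and sign-determining $\mathcal{P}$ at the samples, both within the claimed complexity.

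The reduction to finite algebra is the heart of the argument. After an infinitesimal deformation $P_i \mapsto P_i + \varepsilon_i$, carried out over an ordered extension $R\langle \varepsilon_1,\ldots,\varepsilon_s\rangle$ that puts $\mathcal{P}$ in sufficiently general position, every connected component of every nonempty realization of a sign condition contains a critical point of the squared-distance function $X_1^2 + \cdots + X_k^2$. For each subset $\mathcal{Q} \subseteq \mathcal{P}$ of size at most $k$, the set of such critical points on $\bigcap_{P \in \mathcal{Q}} \{P = 0\}$ is cut out by $\mathcal{Q}$ together with $k - |\mathcal{Q}|$ Lagrange-multiplier equations, producing a zero-dimensional variety. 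Bezout-type bounds give a total of $s^{k+1} c^{\Oh(k)}$ critical points summed over all relevant $\mathcal{Q}$, which is also the bound on the number of realizable sign conditions.

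Each zero-dimensional system is then solved via a \emph{Rational Univariate Representation}: a univariate polynomial whose roots are in bijection with the critical points, paired with $k$ rational functions that recover each coordinate of each point from the corresponding root. Constructing this RUR, and afterwards determining the signs of all polynomials in $\mathcal{P}$ at each recovered point (via a Ben-Or--Kozen--Reif style sign-determination algorithm extended by Roy and Szpirglas), costs $s^{k+1} c^{\Oh(k)}$ arithmetic operations in the ring $D$, matching the claimed bound. Evaluating the Boolean formula $F$ at the sign vectors is then negligible.

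The main obstacle, hidden inside the subroutines, is the careful bookkeeping of the infinitesimals: all arithmetic is carried out symbolically in $D[\varepsilon_1,\ldots,\varepsilon_s]$ (with Thom encodings for the algebraic roots) so that only operations in $D$ are charged, and the perturbation must be chosen so that elimination does not blow up the degree beyond $c^{\Oh(k)}$. A complete proof is developed over Chapter~13 of \cite{basu06}, building on sign determination, subresultant theory, and parametric zero-dimensional solving; we invoke it here as a black box.
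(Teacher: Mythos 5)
The paper does not prove this proposition at all: it is imported verbatim as Theorem~13.13 of \cite{basu06} and used as a black box. Your sketch is a faithful high-level account of how that theorem is actually proved in the cited reference (critical-point method after infinitesimal perturbation, univariate representations of the zero-dimensional critical systems, and BKR-style sign determination carried out symbolically over $D$), and you correctly defer the technical details to the source, so there is nothing to compare against on the paper's side.
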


Furthermore, a point $(X_1^*,\ldots,X_k^*)$ satisfying $F(X_1, \ldots, X_k)$ can be computed in the same time by Algorithm~13.2 (sampling algorithm) of~\cite{basu06} (see Theorem~13.11 of \cite{basu06}).

\section{Kernelization and FPT Algorithms for  \probRearr}\label{sec:rearr-kern} 
In this section, we first prove \Cref{thm:partialKernel} on partial kernel for \probRearr parameterized by $k+d$. Specifically, the output instance of the partial kernel is guaranteed to consist of only $\Oh(d^2k^3)$ unit disks. In case the coordinates of the disks in the input instance are rationals of the form $a+\frac{b}{c}$ where $b,c$ are bounded by a fixed constant (or a polynomial in $k+d$), our partial kernel in fact yields a (normal) kernel. Finally, using our partial kernel, we prove in \Cref{mainFPT} that \probRearr is FPT parameterized by $k+d$.

\begin{figure}
	\begin{center}
		\includegraphics[scale=0.75]{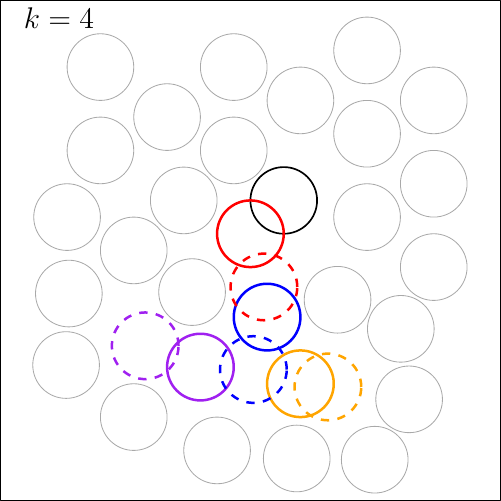}
		\caption{Example of the propagation effect. The dotted objects correspond to a solution where an object of a certain color is replaced by the dashed object of the same color.}
		\label{fig:introPropagation}
	\end{center}
\end{figure}

The proofs of our partial kernels begin with the simple observation that if we are given a yes-instance, then the unit disk graph corresponding to the input set of unit disks admits a vertex cover of size at most $k$. So, in polynomial time we obtain a vertex cover $U$ of size at most $2k$. At first glance, one may think to remove all input unit disks that do not intersect any unit disk in $U$. However, we might  be forced to perform movement operations that make some neighborhood sets larger (e.g., see \Cref{fig:introPropagation}), which, in turn, can have a propagating effect that forces us to move unit disks that are ``quite far'' from all unit disks in $U$. Still, we can prove by induction on $k$ that if the input instance is a yes-instance, then it admits a solution where all the unit disks that are moved are at distance at most $\Oh(d^2k^2)$ from at least one unit disk in $U$. This gives rise to a reduction rule where we only keep the unit disks within this distance from at least one unit disk in $U$ as well as additional unit disks at some (almost negligible) distance from them. 

After having reduced the number of unit disks, we can shift the unit disks that we keep so that the coordinates of their centers will be polynomial in $k+d$, under the assumption that the coordinates of the unit disks in the input instance are rationals of the form $a+\frac{b}{c}$ where $b,c$ are bounded by a fixed constant (or a polynomial in $k+d$). To obtain FPT algorithms, we first apply our partial kernels. Afterwards, we guess which disks to move. Then, we determine how to move them by solving a corresponding system of polynomial inequalities.

For the sake of formality, we will use the notion of a solution in this section as follows.

\begin{definition}
	Let $({\cal S},k,d)$ be an instance of \probRearr. A {\em solution} is a bijective function $\mathsf{move}\colon {\cal S}\rightarrow{\cal P}$ such that:
	\begin{enumerate}
		\item  ${\cal P}$ is a \emph{packing}, i.e., a non-overlapping set of unit disks.
		\item $|\{D\in{\cal S}: \mathsf{move}(D)\neq D\}|\leq k$.
		\item For every $D\in{\cal S}$: The distance between the centers of $D$ and $\mathsf{move}(D)$ is at most $d$.
	\end{enumerate}
	We define the {\em set of unit disks moved by $\mathsf{move}$} as $\{D\in{\cal S}: \mathsf{move}(D)\neq D\}$, and the {\em size of $\mathsf{move}$} as the size of this set.
\end{definition}

Notice that any set of unit disks that is moved by a solution to \probRearr is in particular a vertex cover (though not necessarily a minimal one) for the  intersection graph of the input set of unit disks. As previously discussed, since the {\sc Vertex Cover} problem admits a 2-approximation algorithm in polynomial time, this yields the following observation.

\begin{observation}\label{obs:vcApprox}
	There exists a polynomial-time algorithm that, given an instance $({\cal S},k,d)$ of \probRearr, either correctly concludes that $({\cal S},k,d)$ is a no-instance, or outputs a vertex cover of size at most $2k$ for the unit disk graph corresponding to $\cal S$.
\end{observation}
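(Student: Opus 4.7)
The plan is to combine two standard facts: that the classical maximal-matching heuristic gives a $2$-approximation for \textsc{Vertex Cover} in polynomial time, and that every solution to \probRearr{} moves a set of disks that is itself a vertex cover of $G(\cS)$.

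First I would construct the unit disk graph $G(\cS)$ from the input in polynomial (in fact quadratic) time, as explicitly noted in \Cref{sec:prelim}. I would then apply the well-known polynomial-time $2$-approximation for \textsc{Vertex Cover} (e.g., iteratively picking both endpoints of an arbitrary uncovered edge until none remain) to $G(\cS)$, obtaining a vertex cover $U$. If $|U| \leq 2k$, we simply output $U$. Otherwise, we report that $(\cS,k,d)$ is a no-instance.

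The only thing that needs a short argument is the correctness of this ``otherwise'' branch, i.e., that whenever $(\cS,k,d)$ is a yes-instance, $G(\cS)$ has a vertex cover of size at most $k$ (and hence the $2$-approximation returns one of size at most $2k$). Suppose $\mathsf{move}\colon \cS \to \cP$ is a solution, and let $M = \{D \in \cS : \mathsf{move}(D) \neq D\}$ be its set of moved disks, so that $|M| \leq k$ by the definition of a solution. Consider any edge $\{D_1,D_2\}$ of $G(\cS)$; by definition $D_1$ and $D_2$ overlap. If both $D_1, D_2 \notin M$, then in the output family $\cP$ we would still have the two overlapping disks $\mathsf{move}(D_1)=D_1$ and $\mathsf{move}(D_2)=D_2$, contradicting the fact that $\cP$ is a packing. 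Hence at least one endpoint of every edge of $G(\cS)$ belongs to $M$, so $M$ is a vertex cover of $G(\cS)$ of size at most $k$, and the $2$-approximation is guaranteed to return a cover of size at most $2k$.

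There is no real obstacle here: both the approximation algorithm and the observation that moved disks cover all conflicting pairs are standard. The only mildly subtle point is using the packing condition on $\cP$ together with the fact that disks not in $M$ are mapped to themselves to conclude that $M$ hits every edge; but this is immediate from the definition of a solution stated just above the observation.
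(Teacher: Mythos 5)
Your proof is correct and takes essentially the same route as the paper: the paper likewise notes that the set of moved disks in any solution forms a vertex cover of the intersection graph (so a yes-instance implies a vertex cover of size at most $k$), and then appeals to the standard polynomial-time $2$-approximation for \textsc{Vertex Cover} to output a cover of size at most $2k$ or declare a no-instance. You merely spell out the ``moved disks cover every edge'' step in more detail than the paper's one-line remark.
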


We will also need the following observation, which is directly implied by the fact that the area of a disk of radius $r$ is $\pi r^2$, while the area of a unit disk (whose radius is $1$) is $\pi$.

\begin{observation}\label{obs:packingDisks}
	The number of pairwise non-intersecting unit disks in a disk of radius $r$ is at most $r^2$.
\end{observation}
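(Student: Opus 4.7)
The plan is to argue by a direct area comparison, exactly as the parenthetical remark preceding the observation suggests. Let $D_1, D_2, \ldots, D_N$ be pairwise non-intersecting (open) unit disks all contained in a disk $B$ of radius $r$. Since open sets in $\mathbb{R}^2$ are pairwise non-intersecting precisely when they are pairwise disjoint, additivity of the planar Lebesgue measure on disjoint measurable sets, together with monotonicity under inclusion $\bigcup_i D_i \subseteq B$, yields
\[
\sum_{i=1}^{N} \mathrm{area}(D_i) \;=\; \mathrm{area}\!\Big(\bigcup_{i=1}^{N} D_i\Big) \;\leq\; \mathrm{area}(B).
\]
Each $D_i$ is a unit disk and therefore has area $\pi \cdot 1^2 = \pi$, while $B$ has area $\pi r^2$. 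Substituting gives $N \pi \leq \pi r^2$, and dividing by $\pi$ gives $N \leq r^2$, as claimed.

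There is essentially no obstacle to carrying this out: the entire argument is a one-line measure-theoretic computation. The only subtlety worth flagging is the convention fixed in \Cref{sec:prelim} that ``disk'' means \emph{open} disk, which permits two of the $D_i$ to touch at a single boundary point without violating the non-intersection hypothesis; this is harmless for the area bound since the boundary circles have two-dimensional Lebesgue measure zero, so the disjointness of the open interiors is all that the additivity step requires. (One could also avoid invoking Lebesgue measure explicitly and instead use elementary integration over the regions, but the measure-theoretic formulation keeps the argument to a single line.)
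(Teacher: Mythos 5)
Your argument is correct and is precisely the area-comparison the paper had in mind: the paper does not write out a formal proof but states the observation as ``directly implied'' by the fact that a unit disk has area $\pi$ while the container has area $\pi r^2$, which is exactly the computation you carry out via disjointness and monotonicity of Lebesgue measure. Your parenthetical note that touching open disks remain measure-disjoint, so the additivity step is unaffected, is a correct and appropriate remark given the paper's open-disk convention.
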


Towards the presentation of our partial kernel, we need to prove one lemma. Informally speaking, this lemma shows that the set of disks that may be potentially moved in a yes-instance is contained in a bounded area around a small number of disks, in particular the disks that form a vertex cover in the intersection graph. Furthermore, since all such disks, except that forming the vertex cover, are non-intersecting, this lemma eventually helps us bound the number of such disks by a polynomial in $k$ and $d$.
\begin{lemma}\label{lem:limitEffectShifting}
	Let $({\cal S},d,k)$ be a yes-instance of \probRearr. Let $U$ be a vertex cover for the intersection graph of $\cal S$. Then, any minimum-sized solution to $({\cal S},k,d)$ only moves unit disks whose center is at distance at most $(d+2)\cdot k$ from the center of at least one unit disk in $U$.
\end{lemma}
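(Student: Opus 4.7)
The plan is to argue by contradiction. Assume there is a minimum-sized solution $\mathsf{move}$ (of some size $k'\le k$) that moves a disk $D_0$ whose center is at distance strictly greater than $(d+2)k$ from every center in $U$; I will construct a strictly smaller valid solution $\mathsf{move}^\star$. Because $U$ is a vertex cover and $D_0$ is far from every disk of $U$, we get $D_0\notin U$ and $D_0$ has no original neighbour in ${\cal S}\setminus U$ either, so the naive fix of simply un-moving $D_0$ fails only if some other moved disk $D'$ satisfies $|\mathsf{move}(D')-c_{D_0}|<2$. A one-line triangle inequality then forces $|c_{D'}-c_{D_0}|<d+2$, so $D'$ is still quite far from $U$.

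The key device is an iterative closure. Set $S_0=\{D_0\}$ and inductively
\[
S_{i+1}=S_i\cup\{D':\mathsf{move}(D')\ne D'\text{ and }|\mathsf{move}(D')-c_{D''}|<2\text{ for some }D''\in S_i\},
\]
letting $S$ be the limit. A short induction on $i$ shows that every disk in $S_i$ has centre at distance strictly greater than $(d+2)(k-i)$ from $U$, since each freshly added $D'$ satisfies $|c_{D'}-c_{D''}|<d+2$ by the triangle inequality. Since each $S_i$ is contained in the at most $k'$ moved disks, the chain stabilises at some step $i^\star\le k'\le k$; therefore every $D\in S$ has $\dist(c_D,U)>(d+2)(k-i^\star)\ge 0$, which gives $S\cap U=\emptyset$. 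Moreover, by the very definition of the closure, $|\mathsf{move}(D')-c_D|\ge 2$ for every $D\in S$ and every moved $D'\notin S$.

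I then define $\mathsf{move}^\star(D)=D$ for $D\in S$ and $\mathsf{move}^\star(D)=\mathsf{move}(D)$ otherwise. Since $|S|\ge 1$, this solution has size strictly smaller than $k'$. The remaining task is to verify that the images of $\mathsf{move}^\star$ form a packing, and I will split pairwise conflicts into three cases: (a) $c_D$ for $D\in S$ against $\mathsf{move}(D')$ for a moved $D'\notin S$, ruled out directly by the closure property; (b) $c_D$ against an unmoved $c_{D'}$, where the subcase $D'\in U$ uses $|c_D-c_{D'}|>(d+2)k\ge 2$ (we may assume $k\ge 1$) and the subcase $D'\notin U$ uses that $D$ and $D'$ are both outside $U$ and hence originally non-overlapping; (c) $c_D$ against $c_{D'}$ for two distinct elements of $S$, again by original non-overlap since $S\cap U=\emptyset$. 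This contradicts the minimality of $\mathsf{move}$ and proves the lemma.

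The step I expect to be most delicate is setting up the closure so that the inductive bound on the distance from $U$ loses exactly $d+2$ per step, telescoping cleanly to the hypothesis $(d+2)k$, while simultaneously reconciling the strict inequality $|\mathsf{move}(D')-c_{D''}|<2$ that drives the closure with the non-strict $\ge 2$ needed for non-overlap of the open unit disks in Case (b). All other verifications reduce to routine applications of the triangle inequality and the vertex-cover property of $U$.
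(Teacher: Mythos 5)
Your proof takes a genuinely different route from the paper's. The paper argues by a direct induction on $k$: it picks an intersecting pair $D,D'$, notes that $U$ contains one of them ($X$) and any solution moves one of them ($Y$), replaces $Y$ by $\mathsf{move}(Y)$ in the instance, drops the budget to $k-1$, and applies the inductive hypothesis to the new vertex cover $(U\setminus\{Y\})\cup\{\mathsf{move}(Y)\}$, telescoping $d+2$ per level. Your argument is instead a local-improvement/contradiction: if a minimum solution moves a disk far from $U$, you ``un-move'' it together with the closure of disks whose destinations would collide with the restorations, and show the result is a strictly smaller valid solution. Both are correct in spirit; your version gives a somewhat more concrete picture of \emph{why} far disks are superfluous, while the paper's induction is shorter and avoids the case analysis on conflicts.

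There is, however, a quantitative slip that you should repair, because as written it undercuts Case~(b). You bound the stabilisation index by $i^\star\le k'\le k$ and conclude only $\dist(c_D,U)>(d+2)(k-i^\star)\ge 0$, then in Case~(b) you invoke $|c_D-c_{D'}|>(d+2)k$, which holds for $D_0$ but not for disks added later in the closure. The fix is to observe that the chain grows strictly until it stabilises, so $|S_{i^\star}|\ge i^\star+1$, and since $S_{i^\star}$ is contained in the $k'$ moved disks you get $i^\star\le k'-1\le k-1$. This yields the uniform bound $\dist(c_D,U)>(d+2)(k-i^\star)\ge d+2\ge 2$ for every $D\in S$, which is exactly what Case~(b) with $D'\in U$ needs (and also gives $S\cap U=\emptyset$ immediately). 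With that correction the case analysis goes through: Case~(a) is the closure property; Case~(b) with $D'\notin U$ and Case~(c) both use that two disks outside the vertex cover never overlap originally; and the remaining pairs (both disks outside $S$) inherit non-overlap from $\mathsf{move}$.
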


\begin{proof}
	We prove the lemma by induction on $k$. When $k=0$, the only minimum-sized solution to $({\cal S},k,d)$ is the one that moves no unit disk, and hence the claim trivially follows. Now, suppose that the claim holds for $k-1\geq 0$, and let us prove it for $k$.  If the intersection graph of $\cal S$ is edgeless, then the only minimum-sized solution to $({\cal S},k,d)$ is the one that moves no unit disk, and hence the claim trivially follows as in the base case. So, we can next suppose that there exist two different unit disks $D,D'\in{\cal S}$ that intersect each other. See \Cref{fig:lemshifting} for an illustration.
	
	Since $U$ is a vertex cover, it must contain at least one unit disk among $D$ and $D'$, denoted by $X$.  Moreover, any solution to $({\cal S},k,d)$ must move at least one unit disk among $D$ and $D'$. Let $\mathsf{move}:{\cal S}\rightarrow{\cal P}$ be an arbitrary minimum-sized solution to $\cI= ({\cal S},k,d)$, and let $Y$ be a unit disk among $D$ and $D'$ that $\mathsf{move}$ moves to attain ${\cal P}$. Let $Y'=\mathsf{move}(Y)$, and let ${\cal S}'=({\cal S}\setminus\{Y\})\cup\{Y'\}$. We attain solution $\mathsf{move}':{\cal S}'\rightarrow{\cal P}'$ to a new instance $\cI' = ({\cal S}',k-1,d)$ as follows: for every $\tilde{D}\in{\cal S}\setminus\{Y\}, \mathsf{move}'(\tilde{D})=\mathsf{move}(\tilde{D})$; $\mathsf{move}'(Y')=Y'$. Note that $\mathsf{move}'$ must be a minimum-sized solution to $({\cal S}', k-1, d)$, otherwise we can obtain a solution for the original instance $({\cal S}, k, d)$ that is smaller than $\mathsf{move}$, contradicting its optimality. Further, note that $(U\setminus \{Y\})\cup\{Y'\}$ is a (not necessarily minimal)  vertex cover for the intersection graph of ${\cal S}'$. By the inductive hypothesis, this means that $\mathsf{move}'$  only moves unit disks whose center is at distance at most $(d+2)\cdot (k-1)$ from the center of at least one unit disk in $(U\setminus \{Y\})\cup\{Y'\}$. Moreover, the distance between the centers of $Y$ and $X$ is at most $2$ (since they intersect) and the distance between the centers of $Y'$ and $Y$ is at most $d$, so the distance between the centers $Y'$ and $X$ is at most $d+2$. In turn, this means that $\mathsf{move}$ only moves unit disks at distance at most $(d+2)\cdot k$ from at least one unit disk in $U$, which concludes the proof.
\end{proof}

\begin{figure}
	\centering
	\includegraphics[scale=0.85]{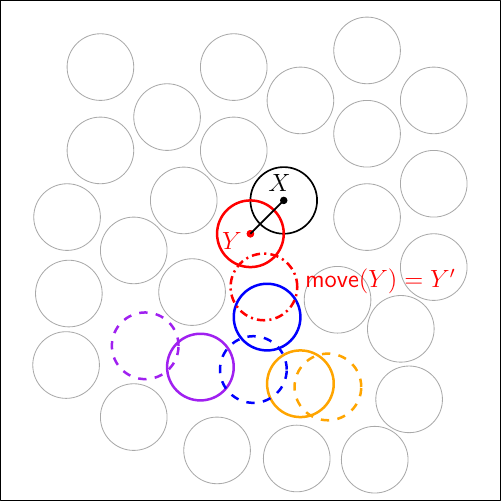}
	\caption{Illustration for Proof of \Cref{lem:limitEffectShifting}. A vertex cover $U$ contains disk $X$ and a solution $\cal S$ moves a disk $Y$ to its new location, $Y' = \mathsf{move}(Y)$, denoted in dash-dotted disk in red color. A new instance $\cI'$ is obtained by replacing $Y$ with $Y'$ and reducing the budget by $1$, and $U' = X \cup Y'$ is a vertex cover for the resulting intersection graph. A solution to $\cI'$ moves the solid blue, purple, and orange disks to their new locations, shown in dashed disks of corresponding color. By inductive hypothesis, the new locations are at distance at most $(d+2) \cdot (k-1)$ from $U'$, and the distance between $X$ and $Y'$ is at most $d+2$.}
	\label{fig:lemshifting}
\end{figure}

We are now ready to present the partial kernel for \probRearr. For the reader's convenience, we restate  \Cref{thm:partialKernel} here. 

\themainkernel*

\begin{proof}
	Given an instance $({\cal S},k,d)$ of  \probRearr, the (partial kernel) kernelization algorithm works as follows. Based on \Cref{obs:vcApprox}, it computes a vertex cover $U$  of size at most $2k$ for the intersection graph of $\cal S$. Then, it obtains ${\cal S}'$ from $\cal S$ by removing from $\cal S$ all the unit disks at distance more than $(d+2)\cdot (k+1)$ from all unit disks in $U$. The output instance is $({\cal S}',k,d)$. Clearly, the kernelization algorithm works in polynomial time. So, it suffices to prove that $({\cal S},k,d)$ and $({\cal S}',k,d)$ are equivalent and that $|{\cal S}'|=\Oh(d^2k^3)$.
	
	We first prove the equivalence. In one direction, suppose that $({\cal S},k,d)$ is a yes-instance, and let $\mathsf{move}:{\cal S}\rightarrow{\cal P}$ be a solution to it. In particular, the restriction of $\mathsf{move}$ to ${\cal S}'$ clearly yields a packing (being a subset of $\cal P$) and moves at most as many disks as $\mathsf{move}$ does. So, the restriction of $\mathsf{move}$ to ${\cal S}'$ is a solution to $({\cal S}',k,d)$.
	
	In the other direction, suppose that $({\cal S}',k,d)$ is a yes-instance. By \Cref{lem:limitEffectShifting}, $({\cal S}',k,d)$ admits a solution $\mathsf{move}':{\cal S}'\rightarrow{\cal P}'$ that only moves unit disks whose centers are at distance at most $(d+2)\cdot k$ from the center of at least one unit disk in $U$.\footnote{Note that ${\cal S}'$ may contains unit disks whose centers are at distance larger than $(d+2)\cdot k$ (but at most $(d+2)\cdot (k+1)$) from the centers of all unit disks in $U$.} Define $\mathsf{move}:{\cal S}\rightarrow{\cal P}$ for ${\cal P}={\cal P}'\cup({\cal S}\setminus{\cal S}')$ as follows: for every $D\in{\cal S}'$, $\mathsf{move}(D)=\mathsf{move}'(D)$, and for every $D\in{\cal S}\setminus{\cal S}'$, $\mathsf{move}(D)=D$. We claim that $\mathsf{move}$ is a solution to $({\cal S},k,d)$. To this end, first note that none of the unit disks in ${\cal P}'$ intersect each other (since $\mathsf{move}'$ is a solution to $({\cal S}',k,d)$). In particular, the unit disks in $\{D\in U: \mathsf{move}(D)=D\}$ do not intersect any other unit disk in ${\cal P}'$. However, all unit disks in $\cal S$ that do not belong to $U$ do not intersect each other (since $U$ is a vertex cover for the intersection graph of $\cal S$). So, in $\cal P$, the only pairs of unit disks that can potentially intersect each other are pairs where one is a unit disk that was moved by $\mathsf{move}$ and the other belongs to ${\cal S}\setminus{\cal S}'$. However, the center of any unit disk $D$ that is moved by $\mathsf{move}$ is at distance at most $(d+2)\cdot k$ from the center of at least one unit disk $D'$ in $U$, and hence the center of $\mathsf{move}(D)$ is at distance at most $d+(d+2)\cdot k$ from the center of $D'$, while the center of any unit disk in ${\cal S}\setminus{\cal S}'$ is at distance more than $(d+2)\cdot (k+1)$ from the centers of all unit disks in $U$. Thus, $\cal P$ cannot have a pair of unit disks that intersect each other, such that one is a unit disk that was moved by $\mathsf{move}$ and the other belongs to ${\cal S}\setminus{\cal S}'$. So, $\mathsf{move}$ is indeed a solution to $({\cal S},k,d)$.
	
	Now, note that for every $D\in U$, the unit disks whose center is at distance at most $(d+2)\cdot (k+1)$ from $D$ are contained in a disk $D'$ of radius $(d+2)\cdot (k+1)+1$ and whose center is the same as the center of $D$. So, by \Cref{obs:packingDisks} and since $U$ is a vertex cover for the intersection graph of $\cal S$, this means that there exist at most $((d+2)\cdot (k+1)+2)^2=\Oh(d^2k^2)$ unit disks in ${\cal S}\setminus U$ that intersect $D'$. As $|U|\leq 2k$, we conclude that $|{\cal S}'|\leq |U| + |U|\cdot ((d+2)\cdot (k+1)+2)^2=\Oh(d^2k^3)$.
\end{proof}

To reduce the bitsize of encoding the coordinates of the unit disks in the output instance, we make use of the following lemma.

\begin{lemma}\label{lem:shrinkCoordinatesHelper}
	There exists a polynomial-time algorithm that, given a set ${\cal D}$ of unit disks whose centers have rational coordinates, a partition $({\cal D}_1,{\cal D}_2,\ldots,{\cal D}_\ell)$ of ${\cal D}$, and $r\in\mathbb{N}$, outputs a set ${\cal D}'$ of unit disks whose centers have rational coordinates and a bijective function $f:{\cal D}\rightarrow{\cal D}'$ with the following properties.
	\begin{itemize}
		\item For all $i\in\{1,2,\ldots,\ell\}$, ${\cal D}_i$ and $\{f(D): D\in{\cal D}_i\}$ are \emph{isometric}, that is, for all $D,D'\in{\cal D}_i$, we have $\mathsf{distance}(D,D')=\mathsf{distance}(f(D),f(D'))$.
		\item For all distinct $i,j\in\{1,2,\ldots,\ell\}, D\in{\cal D}_i$ and $D'\in{\cal D}_j$, we have $\mathsf{distance}(D,D')> r$.
		\item Encoding the coordinates (in unary) of all the unit disks in $\{f(D): D\in{\cal D}_i\}$ requires space polynomial in $\displaystyle r,
		|{\cal D}|,m=\max_{i=1}^\ell\max_{D,D'\in{\cal D}_i}\mathsf{distance}(D,D')$ and $N=\max_{b,c}(b+c)$ over every $b,c\in\mathbb{N}, b<c,$ and $b, c$ are coprime, such that $a+\frac{b}{c}$ is a coordinate of a center of a unit disk in $\cal D$.
	\end{itemize}
\end{lemma}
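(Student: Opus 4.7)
My plan is to define $f$ as a part-wise translation by integer vectors. Since any translation is an isometry, property (1) will hold by construction; and since the fractional parts of coordinates are invariant under integer translation, each coordinate of every disk in $\mathcal{D}'$ has the same $b/c$ part as some coordinate in $\mathcal{D}$, so the quantity $N$ does not increase when passing from $\mathcal{D}$ to $\mathcal{D}'$. Hence it only remains to bound the integer parts of the output coordinates while ensuring that distinct parts are placed at separation strictly greater than $r$.

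Concretely, let $s := 2\lceil m \rceil + r + 3 \in \mathbb{N}$ and enumerate the parts as $\mathcal{D}_1, \ldots, \mathcal{D}_\ell$. I place them on a $\lceil\sqrt{\ell}\,\rceil \times \lceil\sqrt{\ell}\,\rceil$ array of axis-aligned square cells of side $s$, assigning to $\mathcal{D}_i$ a target center $q_i \in s\mathbb{Z}^2$ with $\|q_i\|_\infty \leq s\lceil\sqrt{\ell}\,\rceil = \Oh(\sqrt{|\mathcal{D}|}\,(m+r))$. For each $i$, pick any reference disk in $\mathcal{D}_i$ with center $p_i$, and let $t_i \in \mathbb{Z}^2$ be the component-wise nearest integer to $q_i - p_i$. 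Define $f(D)$ as the unit disk whose center is the center of $D$ translated by $t_i$, for every $D \in \mathcal{D}_i$.

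Verification of the remaining properties is then routine. For $D \in \mathcal{D}_i$, the center of $D$ lies within distance $m$ of $p_i$, so the center of $f(D)$ lies within distance $m + \tfrac{\sqrt{2}}{2}$ of $q_i$. Hence for $i \neq j$, any two points in $f(\mathcal{D}_i)$ and $f(\mathcal{D}_j)$ are at distance at least $\|q_i - q_j\|_2 - 2(m + \tfrac{\sqrt{2}}{2}) \geq s - 2m - \sqrt{2} > r$, yielding property (2), and in particular $f$ is injective (distinct disks within a single part remain distinct since translations are injective). For property (3), each coordinate of every center in $\mathcal{D}'$ decomposes as an integer of absolute value $\Oh(\sqrt{|\mathcal{D}|}\,(m+r))$ plus a rational $b/c$ with $b+c \leq N$, so its unary encoding has size polynomial in $r$, $|\mathcal{D}|$, $m$, and $N$, and the bijection $f$ itself is trivially encoded by the list of translations $t_1,\ldots,t_\ell$.

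The construction is essentially a pigeonhole packing of the parts into a coarse integer grid, and every step (computing diameters and reference points, laying out the grid, computing the rounded translations) is elementary and runs in polynomial time. There is no real obstacle beyond choosing $s$ large enough to guarantee strictly greater-than-$r$ separation after absorbing the $\sqrt{2}/2$ integer-rounding slack, and then verifying the book-keeping on the encoding size.
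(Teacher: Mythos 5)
Your proof is correct but follows a genuinely different construction from the paper's, and in a couple of respects it is more careful. The paper translates each part ${\cal D}_i$ by the vector $(-x^{\mathsf{left}}_i + (i-1)(m+r),\; -y^{\mathsf{bottom}}_i + (i-1)(m+r))$, which normalizes the leftmost/bottommost centers of each part and then offsets the parts along the main diagonal; the coordinate magnitudes this produces are $O(\ell(m+r)) = O(|{\cal D}|(m+r))$. You instead translate each part by an integer vector chosen by rounding and place the parts on a two-dimensional $\lceil\sqrt{\ell}\,\rceil\times\lceil\sqrt{\ell}\,\rceil$ grid of cell size $s = 2\lceil m\rceil + r + 3$, giving the slightly tighter bound $O(\sqrt{|{\cal D}|}\,(m+r))$. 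Both are polynomial in the stated parameters, so either suffices for the lemma. What your version buys is cleanliness in the rationality bookkeeping: integer translations preserve the fractional parts $b/c$ of all coordinates exactly, so the denominator parameter $N$ is literally unchanged, whereas the paper's translation subtracts the rational $x^{\mathsf{left}}_i$ (which can enlarge denominators, though only polynomially) and, read strictly, adds the possibly irrational quantity $(i-1)(m+r)$ --- $m$ being a maximum of Euclidean distances --- an issue the paper sidesteps only implicitly by rounding. The price you pay is a short extra rounding argument absorbing the $\sqrt{2}/2$ slack, which you build directly into the choice of $s$.
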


\begin{proof}
	For every $i\in\{1,2,\ldots,\ell\}$, let $L_i$ be a leftmost unit disk in ${\cal D}_i$ (i.e., with a smallest $x$-coordinate of its center), and let $D_i$ be a bottommost unit disk in ${\cal D}_i$ (i.e., with a smallest $y$-coordinate of its center), and denote their centers by $(x^{\mathsf{left}}_i,y^{\mathsf{left}}_i)$ and $(x^{\mathsf{bottom}}_i,y^{\mathsf{bottom}}_i)$, respectively.  Now, for every $i\in\{1,2,\ldots,\ell\}$ and every $D\in{\cal D}_i$ with center $(x,y)$, define $f(D)$ as the unit disk whose center is $(x-x^{\mathsf{left}}_i + (i-1)\cdot (m+r), y-y^{\mathsf{bottom}}_i + (i-1)\cdot (m+r))$. We define ${\cal D}'$ as the set of unit disks assigned by $f$. Clearly, $f:{\cal D}\rightarrow{\cal D}'$ is bijective and the third property in the lemma holds.
	
	For the first property, consider two unit disks $D,D'\in {\cal D}_i$ for some $i\in\{1,2,\ldots,\ell\}$ with centers $(x,y)$ and $(x',y')$, respectively. Then, $\mathsf{distance}(f(D),f(D'))$ is equal to the square root of $((x-x^{\mathsf{left}}_i + (i-1)\cdot (m+r))-(x'-x^{\mathsf{left}}_i + (i-1)\cdot (m+r)))^2+((y-y^{\mathsf{bottom}}_i + (i-1)\cdot (m+r))-(y'-y^{\mathsf{bottom}}_i + (i-1)\cdot (m+r)))^2$, which is precisely $\sqrt{(x-x')^2+(y-y')^2} = \mathsf{distance}(D,D')$.
	So, the first property in the lemma holds.
	
	For the second property, consider two unit disks $D\in{\cal D}_i,D'\in {\cal D}_j$ for some $i,j\in\{1,2,\ldots,\ell\}$, $i<j,$ with centers $(x,y)$ and $(x',y')$, respectively. Then, $\mathsf{distance}(f(D),f(D'))$ is equal to the square root of $((x'-x^{\mathsf{left}}_j + (j-1)\cdot (m+r))-(x-x^{\mathsf{left}}_i + (i-1)\cdot (m+r)))^2+((y'-y^{\mathsf{bottom}}_j + (j-1)\cdot (m+r))-(y-y^{\mathsf{bottom}}_i + (i-1)\cdot (m+r)))^2$. Observe that $x'\geq x^{\mathsf{left}}_j, y'\geq y^{\mathsf{bottom}}_j, x\leq x^{\mathsf{left}}_i+m, y\leq y^{\mathsf{bottom}}_i+m$. So, the above expression is lower bounded by 
	\[\begin{array}{l}
		\displaystyle{\sqrt{2\left((j-1)\cdot (m+r) - (m+ (i-1)\cdot(m+r) )\right)^2}}
		= \displaystyle{\sqrt{2}\cdot ((j-i)(m+r)-m)\geq \sqrt{2}r.}
		
	\end{array}\]
	In particular, $\mathsf{distance}(f(D),f(D'))> r$. So, the second property in the lemma holds.
\end{proof}

We will also need the following simple observation.

\begin{observation}\label{obs:limitEffect}
	Let $\cal S$ be a set of unit disks in the Euclidean plane. Let $D\in {\cal S}$. Then, by moving $D$ by a distance of at most some $d\in\mathbb{N}$,  $D$ cannot intersect unit disks whose centers  are at distance at least $d+2$ from the original position of the center of $D$.
\end{observation}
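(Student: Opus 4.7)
The plan is to prove \Cref{obs:limitEffect} by a direct application of the triangle inequality on the centers of the disks involved. Let $C$ denote the original center of $D$, and let $C'$ denote the center of $D$ after the move, so that $|CC'|\leq d$ by hypothesis. Let $D''\in{\cal S}$ be any unit disk whose center $C''$ satisfies $|CC''|\geq d+2$. I want to show that the moved copy of $D$ (now centered at $C'$) and $D''$ are disjoint.

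Recall from the preliminaries that two open unit disks are disjoint if and only if the distance between their centers is at least $2$. So it suffices to establish $|C'C''|\geq 2$. By the triangle inequality applied to the points $C$, $C'$, and $C''$, we have
\[
|CC''| \;\leq\; |CC'| + |C'C''|,
\]
which upon rearranging gives
\[
|C'C''| \;\geq\; |CC''| - |CC'| \;\geq\; (d+2) - d \;=\; 2.
\]
Hence the moved disk and $D''$ are disjoint, which is exactly the claim.

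There is essentially no obstacle here: the whole argument is a single application of the reverse triangle inequality combined with the characterization of disk disjointness in terms of center-to-center distance. The only mild care needed is to confirm that the ``open unit disk'' convention from the preliminaries is being used consistently, so that the boundary case $|C'C''|=2$ (touching disks) indeed counts as non-intersecting; this was explicitly adopted in the footnote in \Cref{sec:intro}.
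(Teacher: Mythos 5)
Your proof is correct and is exactly the standard reverse-triangle-inequality argument the authors evidently had in mind; the paper itself gives no explicit proof, treating this as a self-evident observation. The care you take with the open-disk convention for the boundary case $|C'C''|=2$ is appropriate and consistent with the footnote in the introduction.
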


Based on \Cref{lem:shrinkCoordinatesHelper} and \Cref{obs:limitEffect}, we prove the following.

\begin{lemma}\label{lem:shrinkCoordinates}
	There exists a polynomial-time algorithm that, given an instance $({\cal S},k,d)$ of \probRearr\ where the centers of all disks have rational coordinates, and a partition $({\cal S}_1,{\cal S}_2,\ldots,{\cal S}_\ell)$ of ${\cal S}$ such that for all $i,j\in\{1,2,\ldots,\ell\},D\in{\cal S}_i$ and $D'\in{\cal S}_j$, we have $\mathsf{distance}(D,D')\geq 2d+2$, outputs an equivalent instance of \probRearr, respectively, with the same parameters $k,d$ and number of unit disks, where encoding the coordinates of all the unit disks (in unary) requires space polynomial in $d$, $|{\cal S}|$, $m=\displaystyle{\max_{i=1}^\ell\max_{D,D'\in{\cal S}_i}\mathsf{distance}(D,D')}$ and $N=\max_{b,c}(b+c )$ over every $b,c\in\mathbb{N}, b<c,$ such that $a+\frac{b}{c}$ is a coordinate of a center of a unit disk in $\cal D$.
\end{lemma}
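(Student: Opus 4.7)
The plan is to invoke \Cref{lem:shrinkCoordinatesHelper} directly with the given partition $({\cal S}_1, \ldots, {\cal S}_\ell)$, the underlying set ${\cal S}$, and the parameter $r = \lceil 2d + 2 \rceil$ (taken to be an integer, which is still polynomial in $d$). The helper lemma produces a new set ${\cal S}'$ of unit disks and a bijection $f : {\cal S} \to {\cal S}'$ such that (i) within each group $f$ is an isometry, (ii) any two disks placed by $f$ into different groups are at distance strictly greater than $2d + 2$, and (iii) the coordinates of ${\cal S}'$ can be encoded in space polynomial in $r$, $|{\cal S}|$, $m$, and $N$, which is exactly the bound demanded by the statement. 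The output instance is $({\cal S}', k, d)$. The partition of ${\cal S}'$ induced by $f$ from the partition of ${\cal S}$ satisfies the same $2d+2$ separation hypothesis, so nothing structural is lost.

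The remaining task is to verify that $({\cal S}, k, d)$ and $({\cal S}', k, d)$ are equivalent. For each index $i$, the map $f$ restricted to ${\cal S}_i$ is a rigid translation by some vector $v_i$ (this is implicit in the construction used to prove \Cref{lem:shrinkCoordinatesHelper}; alternatively one can appeal only to its isometry property since all ${\cal S}_i$ are in $\mathbb{R}^2$). Given a solution $\mathsf{move}$ for ${\cal S}$, define a candidate solution $\mathsf{move}'$ for ${\cal S}'$ by: if $D \in {\cal S}_i$ has center $c$ and $\mathsf{move}(D)$ has center $c + w$ with $\|w\| \leq d$, then $\mathsf{move}'(f(D))$ is the unit disk with center $f(D)_c + w$. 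The bound on the number of moved disks is preserved since $\mathsf{move}(D) = D \iff w = 0 \iff \mathsf{move}'(f(D)) = f(D)$, and each individual displacement is at most $d$.

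The key verification is that the resulting placement is a packing. For two disks $D_1, D_2$ that lie in the same group ${\cal S}_i$, the displacements $w_1, w_2$ are common to both $D_j$ and $f(D_j)$, so the isometry property of $f$ on ${\cal S}_i$ gives
\[
\|\mathsf{move}'(f(D_1))_c - \mathsf{move}'(f(D_2))_c\| = \|\mathsf{move}(D_1)_c - \mathsf{move}(D_2)_c\| \geq 2,
\]
and the open unit disks do not overlap. For two disks $D_1 \in {\cal S}_i$, $D_2 \in {\cal S}_j$ with $i \neq j$, property (ii) of the helper lemma and the triangle inequality yield
\[
\|\mathsf{move}'(f(D_1))_c - \mathsf{move}'(f(D_2))_c\| \;>\; (2d + 2) - \|w_1\| - \|w_2\| \;\geq\; 2,
\]
so again the disks are disjoint. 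The backward direction is entirely symmetric, using $f^{-1}$ in place of $f$; here \Cref{obs:limitEffect} together with the $2d+2$ separation of ${\cal S}_i$ from ${\cal S}_j$ in the original instance is what lets any solution on ${\cal S}'$ be transplanted back to ${\cal S}$ without inter-group conflicts.

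The main obstacle is essentially bookkeeping rather than a genuine difficulty: one must be careful that the separation constant chosen for $r$ in the helper lemma is large enough that after each disk moves by at most $d$, pairs originally in different groups cannot collide. The choice $r = 2d+2$ is tight for this purpose, as witnessed by the two triangle-inequality computations above, and it keeps the coordinate blow-up within the polynomial bound promised by \Cref{lem:shrinkCoordinatesHelper}.
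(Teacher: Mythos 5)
Your proof is correct and follows the same route as the paper: invoke \Cref{lem:shrinkCoordinatesHelper} with $r$ set to (essentially) $2d+2$, then argue equivalence by observing that within each ${\cal S}_i$ the map $f$ is an isometry while across groups the separation $> 2d+2$ (together with \Cref{obs:limitEffect} or, in your version, a direct triangle-inequality computation) prevents any inter-group collisions after movement. The paper's proof is terser and leaves the transfer of a solution between $\cal S$ and ${\cal S}'$ implicit, whereas you spell it out by explicitly transporting displacement vectors through $f$; this is the same idea, just written out in more detail.
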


\begin{proof}
	The algorithm simply applies the algorithm in \Cref{lem:shrinkCoordinatesHelper} with $r=2d+2$, and obtains $f:{\cal S}\rightarrow{\cal S}'$. Then, it returns ${\cal D}'$. From \Cref{lem:shrinkCoordinatesHelper}, it directly follows that encoding the coordinates of all the unit disks requires space polynomial in $d,|{\cal D}|,m$ and $N$.
	Recall that for all $i,j\in\{1,2,\ldots,\ell\},D\in{\cal S}_i$ and $D'\in{\cal S}_j$, we have $\mathsf{distance}(D,D')\geq 2d+2$, and this property is preserved under the mapping $f$ (by our choice of $r$). So, \Cref{obs:limitEffect} implies that the sub-instances induced by the different sets ${\cal S}_i$ are ``independent'' from each other: we cannot move unit disks in one set ${\cal S}_i$ so that they intersect unit disks in another set ${\cal S}_j$. Also, the same holds for the sub-instances they are mapped to by $f$. As every sub-instance induced by some set ${\cal S}_i$ is equivalent to the sub-instance it is mapped to by $f$ since ${\cal S}_i$ and $\{f(D): D\in{\cal S}_i\}$ are isometric, we conclude that $({\cal S},k,d)$ and $({\cal S}',k,d)$ are equivalent.
\end{proof}

We our now ready to present our (non-partial) kernel for \probRearr. In particular, if $N$ is a constant (or polynomial in $k+d$), the parameterization can be assumed to be only by $k+d$. 

\begin{theorem}\label{thm:kern-denom}
	\probRearr, restricted to instances where the centers of all disks have rational coordinates, admits a polynomial kernel with respect to $k+d+N$, where $N=\max_{b,c}(b+c )$ over every $b,c\in\mathbb{N}, b<c,$ such that $a+\frac{b}{c}$ is a coordinate of a center of a unit disk in $\cal S$.
\end{theorem}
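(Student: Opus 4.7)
The plan is to compose the two main tools already built in this section: the partial kernel of Theorem~\ref{thm:partialKernel}, which controls the \emph{number} of disks, and Lemma~\ref{lem:shrinkCoordinates}, which controls the \emph{encoding size} of their coordinates. First I would run the algorithm of Theorem~\ref{thm:partialKernel} on $({\cal S},k,d)$ to obtain an equivalent instance $({\cal S}',k,d)$ with $|{\cal S}'| = \Oh((d+1)^2k^3)$. Since ${\cal S}' \subseteq {\cal S}$, every center of a disk in ${\cal S}'$ still has rational coordinates of the form $a + b/c$ with $b < c$ and $b+c \leq N$.

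To invoke Lemma~\ref{lem:shrinkCoordinates} I need a partition of ${\cal S}'$ such that any two disks from different blocks are at distance at least $2d+2$. The natural construction is to form the \emph{proximity graph} on ${\cal S}'$ whose edges join pairs of centers at distance strictly less than $2d+2$, and to take its connected components $({\cal S}'_1,\ldots,{\cal S}'_\ell)$ as the partition. The required separation $\mathsf{distance}(D,D') \geq 2d+2$ for $D \in {\cal S}'_i$, $D' \in {\cal S}'_j$ with $i \neq j$ holds by construction.

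The next step is to bound $m = \max_i \max_{D,D' \in {\cal S}'_i}\mathsf{distance}(D,D')$. Within any component, two disks are joined by a path of at most $|{\cal S}'| - 1$ proximity edges, each of length less than $2d+2$, so the triangle inequality gives $m \leq (2d+2)\cdot |{\cal S}'| = \Oh(d^3 k^3)$. Now I apply Lemma~\ref{lem:shrinkCoordinates} with this partition to produce an equivalent instance $({\cal S}'',k,d)$ with $|{\cal S}''| = |{\cal S}'|$ whose coordinates, encoded in unary, use space polynomial in $d$, $|{\cal S}'|$, $m$, and $N$; substituting the bounds on $|{\cal S}'|$ and $m$ shows this is polynomial in $k+d+N$. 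Combined with the bound on the number of disks, the total bitsize of $({\cal S}'',k,d)$ is polynomial in $k+d+N$, which is exactly the required polynomial kernel.

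The only step requiring any genuine thought is the partition construction and the diameter estimate, since Lemma~\ref{lem:shrinkCoordinates} itself does the heavy lifting on coordinate compression; but once the partial kernel has already reduced $|{\cal S}'|$ to a polynomial in $k$ and $d$, the diameter bound $m = \Oh(d^3k^3)$ is essentially automatic from the connected-component construction and the triangle inequality.
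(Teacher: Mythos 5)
Your proposal is correct and follows essentially the same route as the paper: reduce the number of disks via \Cref{thm:partialKernel}, partition the surviving disks into clusters whose pairwise inter-cluster center distance is at least $2d+2$, and then apply \Cref{lem:shrinkCoordinates}. Your ``proximity graph at threshold $2d+2$'' is exactly the intersection graph of the radius-$(d+1)$ disks $\cal W$ used in the paper, so the connected components (and hence the partition) coincide; you additionally make explicit the diameter bound $m \leq (2d+2)\,|{\cal S}'| = \Oh(d^3k^3)$ via the path-plus-triangle-inequality argument, which the paper's proof leaves implicit but is indeed needed to conclude that the final encoding size is polynomial in $k+d+N$.
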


\begin{proof}
	Given an instance $({\cal S},k,d)$ of \probRearr, restricted to instances where the centers of all disks have rational coordinates, the kernelization algorithm works as follows. First, we call the algorithm in \Cref{thm:partialKernel} to obtain an equivalent instance $({\cal S}',k,d)$ of \probRearr. Here, $k,d$ remain unchanged, and ${\cal S}'$ is a subset of $\cal S$. Let ${\cal W}=\{W_D: D\in {\cal S}'\}$ where $W_D$ is a disk whose center is the same as the center of $D$ and whose radius is $d+1$. Let ${\cal C}$ be the set of connected components of the intersection graph of $\cal W$. Let ${\cal P}$ be the partition of ${\cal S}'$ such that two unit disks in ${\cal S}'$ belong to the same part if and only if there exists a connected component in $\cal C$ such that both are intersected by (possibly different) disks that belong to that component. It should be clear, from the definitions of ${\cal S}'$ and ${\cal W}$, that this is indeed a partition, and that if two unit disks in ${\cal S}'$ belong to different parts in this partition, then the distance between their centers is larger than $2d+2$.
	So, the kernelization algorithm then calls the algorithm in \Cref{lem:shrinkCoordinates} on $({\cal S}',k,d)$ and $\cal P$ as the partition of ${\cal D}'$, and returns its output.
\end{proof}

Lastly, based on \Cref{thm:partialKernel} and \Cref{prop:polyequations}, we prove  \Cref{mainFPT} stating that \probRearr is FPT when parameterized by $d+k$. We restate the theorem here. 

\themainFPT*
%

\begin{proof}
	Given an instance  $({\cal S},k,d)$ of \probRearr, the algorithm first calls the algorithm in \Cref{thm:partialKernel} to obtain (in polynomial time) an equivalent instance  $({\cal S}',k,d)$ of \probRearr, where ${\cal S}'\subseteq{\cal S}$ is of size $\Oh(d^2k^3)$. Then, for every ${\cal A}\subseteq{\cal S}'$ of size at most $k$ such that ${\cal S}'\setminus{\cal A}$ is a packing, the algorithm tests whether it is possible to move each unit disk in $\cal A$ by a distance of at most $d$ so that, afterwards, ${\cal S}'$ becomes a packing. This can be done by using the algorithm in \Cref{prop:polyequations} to solve the following system of polynomial inequalities, which has variables $x_A,y_A$ for every $A\in{\cal A}$:
	\begin{itemize}
		\item For every $S\in{\cal S}'\setminus{\cal A}$ and $A\in{\cal A}$: $(x_A-a)^2+(y_A-b)^2\geq 4$, where $(a,b)$ denotes the center of $S$.
		\item For every distinct $A_1,A_2\in{\cal A}$: $(x_{A_1}-x_{A_2})^2+(y_{A_1}-y_{A_2})^2\geq 4$.
		\item For every $A\in{\cal A}$, where $(a,b)$ denotes the center of $A$ in ${\cal S}'$: $(x_A-a)^2+(y_A-b)^2\leq d^2$.
	\end{itemize}
	The correctness of the algorithm is immediate. For its running time analysis, notice that there are only $\sum_{i=0}^k{|{\cal S}'| \choose i}\leq (dk)^{\Oh(k)}$ choices for $\cal A$. Further, each of the systems of polynomial equations that are solved has at most $2k$ variables, degree $2$, and $\Oh(|{\cal A}|\cdot|{\cal S}'|)\leq (dk)^{\Oh(1)}$ equations. So, by \Cref{prop:polyequations}, it is solvable in time  $(dk)^{\Oh(k)} \cdot |I|^{\Oh(1)}$. In turn, we conclude that the algorithm runs in time $(dk)^{\Oh(k)} \cdot |I|^{\Oh(1)}$.
\end{proof}

\section{Kernelization lower bound for  \probRearr}\label{sec:rearr-nokern} 
In this section, we prove \Cref{thm:nopolyker}. To this end, we show that from several instances of \probCPack (defined below), we can construct a single instance $\cI'$ of \probRearr such that there is a solution to $\cI'$ if and only if there is a solution to at least one of the instances of \probCPack. The result then follows from the cross-composition technique (see \cite{FominLSZ19}, Chapter 17 for more details). \probCPack is defined as follows.

\defproblema{\probCPack}%
{A packing $\cP$ of $n$ unit disks inside a rectangle $R$ and an integer $\kappa \geq 0$.}%
{Decide whether there is a packing $\cP^*$ of $n+\kappa$ unit disks inside $R$ obtained from $\cP$ by adding $\kappa$ new disks.}
A recent result of Fomin et al. \cite{FominGISZ22,FominG0Z22ICALP} shows that the problem is \classNP-hard. In particular, they show the following result.

\begin{proposition}[Corollary 2 in \cite{FominGISZ22}]\label{cor:compl-hard-disk}
	\probCPack is \classNP-hard. Furthermore, it remains \classNP-hard, even when restricted to instances $(R, \cP, \kappa)$ of the following form.
	\begin{itemize}
		\item Rectangle $R$ is $[0, 2a] \times [0, 2b]$ for integers $a, b > 0$. It can also be assumed that $a = b$.
		\item A packing $\cP$ of disks with their centers inside $R$ such that (i) for every $i\in\{0,\ldots,a\}$, the disks with centers $(2i,0)$ and 
		$(2i,2b)$ are in $\cP$ and (ii) for   every $j\in\{0,\ldots,b\}$, the disks with centers $(0,2j)$ and 
		$(2a,2j)$ are in $\cP$.
	\end{itemize}
\end{proposition}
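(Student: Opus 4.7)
I would prove the corollary by a polynomial-time reduction from \textsc{Planar 3-SAT} to \probCPack, engineered so that the output instance already has the restricted form prescribed by the statement. Given a planar 3-SAT formula $\varphi$, the reduction produces a triple $(R,\cP,\kappa)$ with $R=[0,2a]\times[0,2a]$ and $\cP$ containing the mandated frame disks centered at $(2i,0),(2i,2a),(0,2j),(2a,2j)$, such that $\cP$ admits an extension by exactly $\kappa$ further unit disks inside $R$ iff $\varphi$ is satisfiable.

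The construction has four layers: (i) the mandated \emph{frame} placed in $\cP$ from the start, which is a valid sub-packing since consecutive boundary centers are at distance exactly $2$ (open disks); (ii) a \emph{variable gadget}, a dense local cluster of preplaced disks around a small slot that admits exactly two valid completions by a fixed number $v$ of unit disks, representing \textsc{true}/\textsc{false}; (iii) a \emph{wire gadget}, a chain of linked slots that propagates a chosen truth value along a corridor, each completion contributing a fixed number of disks per unit length; (iv) a \emph{clause gadget} with three incoming wires that admits a fixed number of completion disks iff at least one incoming wire carries \textsc{true}. The planarity of $\varphi$ lets me route all wires inside $R$ without crossings, while any remaining free area is filled with \emph{inert padding} — a hexagonal sub-lattice of preplaced disks with pitch strictly less than $\sqrt{3}$ — provably leaving no room for any additional unit disk. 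The integer $\kappa$ is set to the total completion count across all variable, wire, and clause gadgets under a satisfying assignment.

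To reach the exact container $R=[0,2a]\times[0,2a]$ I would choose $a\in\mathbb{N}$ so that $2a$ exceeds the extent of the gadget layout, fill the slack with additional inert padding (equalizing width and height to obtain $a=b$), and insert the frame disks in the last step with one safety row of padding between the frame and the interior gadgetry so that no boundary disk interferes with a gadget.

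The main obstacle is the \emph{rigidity} of the gadgets. Unlike unit squares on a grid, unit disks can be translated continuously, so it is not automatic that a variable (or clause) gadget admits only the intended discrete completions and not ``sliding'' intermediate ones. This is typically handled by making the preplaced disks around each slot so dense that the remaining free region has diameter strictly less than $2$ except at the intended completion positions, and by verifying that the two valid completions of a variable gadget, or the accepting completions of a clause gadget, are genuinely independent. A secondary subtlety is aligning the inert padding pitch, the gadget scale, and the integer-coordinate frame so that every seam is flush and no accidental free area of diameter $\ge 2$ is created; this can be arranged by taking the gadget scale to be an even-integer multiple of the padding pitch. Once these geometric invariants are checked, the correctness of the reduction follows from the standard variable/wire/clause correspondence for \textsc{Planar 3-SAT}.
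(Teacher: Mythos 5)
This proposition is not proved in the paper at all: it is imported verbatim as Corollary~2 of \cite{FominGISZ22}, so there is no in-paper argument to compare yours against. Treating your text as an attempt to reprove the cited result from scratch, it is a reduction plan rather than a proof, and it has both a concrete error and a gap at exactly the point where the real difficulty lies. The concrete error is your ``inert padding'': a hexagonal sub-lattice of \emph{unit} disks with pitch strictly less than $\sqrt{3}$ is not a packing, since unit disks are disjoint only when their centers are at distance at least $2$; with pitch below $\sqrt{3}\approx 1.73$ the padding disks overlap one another, so $\cP$ would not be a valid input. The correct density argument uses centers at mutual distance exactly $2$ (e.g., the even integer grid, consistent with the mandated frame disks at $(2i,0)$, $(0,2j)$, etc.): in that configuration the deep holes have radius only $\sqrt{2}<2$, so no additional unit disk can be appended. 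Your padding therefore needs to be replaced wholesale, and the seam between padding, gadgets, and the prescribed boundary frame must be re-examined with pitch~$2$.

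The deeper gap is the one you yourself flag and then defer: rigidity. For unit disks, unlike grid-aligned squares, a ``slot'' admitting exactly two discrete completions, or a wire that provably propagates a Boolean value without admitting sliding hybrids, is the entire content of the hardness proof; asserting that ``the remaining free region has diameter strictly less than $2$ except at the intended completion positions'' is a statement about continuous configuration spaces of disk packings that must be verified gadget by gadget, and it is not clear that your three-gadget scheme (variable, wire, clause with a counting threshold $\kappa$) survives this verification. Until the variable and clause gadgets are exhibited explicitly with a proof that every maximum completion is one of the intended ones, the reduction does not go through. I would also note that the statement you are proving demands a specific normal form (square container $[0,2a]\times[0,2a]$ with frame disks at all even boundary points); your plan is compatible with this, but achieving it is bookkeeping layered on top of an unproved core.
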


\begin{proof}[Proof of Theorem \ref{thm:nopolyker}.]
	The reader may wish to refer to \Cref{fig:nopolykernel}, which explains the schematics of the reduction. We consider instances $(R, \cP, n, \kappa)$ of \probCPack, where $R$ is an $[0,a] \times [0,a]$ square, where $a$ is an even positive integer, $\cP$ is a packing of $n$ disks with their centers inside $R$, such that the centers of the disks are rational, and $\kappa$ is the number of disks that need to be added inside $R$, which is compatible with $\cP$, to obtain a packing of $n+k$ disks. We also assume that  for every $i\in\{1,\ldots,a/2\}$, the disks with centers $(2i-1,1)$, $(2i-1,a-1)$, $(1,2i-1)$ and $(a-1,2i-1)$  are in $\cP$.
	
	For the cross-composition, we first show the polynomial equivalence relation $\mathcal{R}$, over instances $(R_i, \cP_i, n_i, \kappa_i)$ of \probCPack. The instances $(R_i, \cP_i, n_i, \kappa_i)$ and $(R_j, \cP_j, n_j, \kappa_j)$ go to the same equivalence classes if (1) the squares $R_i$ and $R_j$ have the same dimension, (2) $\cP_i$ and $\cP_j$ is a packing of $n_i = n_j$ disks inside $R_i$ and $R_j$ respectively with centers having rational coordinates, and (3) $\kappa_i = \kappa_j$. All the other malformed instances go into another equivalence class (see \cite{FominLSZ19} for the formal requirements of the equivalence relation). Note that $\mathcal{R}$ satisfies the properties of polynomial equivalence relation, since the equivalence can be checked in polynomial time, and (2) $\mathcal{R}$ partitions the elements of $S$ into at most $\lr{\max_{x \in S} |x|}^{\Oh(1)}$ classes in a well-formed instance, since $\kappa_i \le n_i$ (this can be assumed w.l.o.g.\ by padding the instance as required, as per 
	\Cref{cor:compl-hard-disk}). 
	
	Now we give a cross-composition algorithm for instances belonging to the same equivalence class. For the last equivalence class of malformed instances, we output a trivial no-instance. Thus, from now on, we focus on an equivalence class $(R_1, \cP_1, n, \kappa), \ldots, (R_t, \cP_t, n, \kappa)$, such that $a$ is the sidelength of every square $R_1, \ldots, R_t$. We assume w.l.o.g.\ that $t$ is odd, and $a$ is an even integer that is at least $10\kappa$.
	
	\begin{figure}
		\centering
		\includegraphics[scale=1]{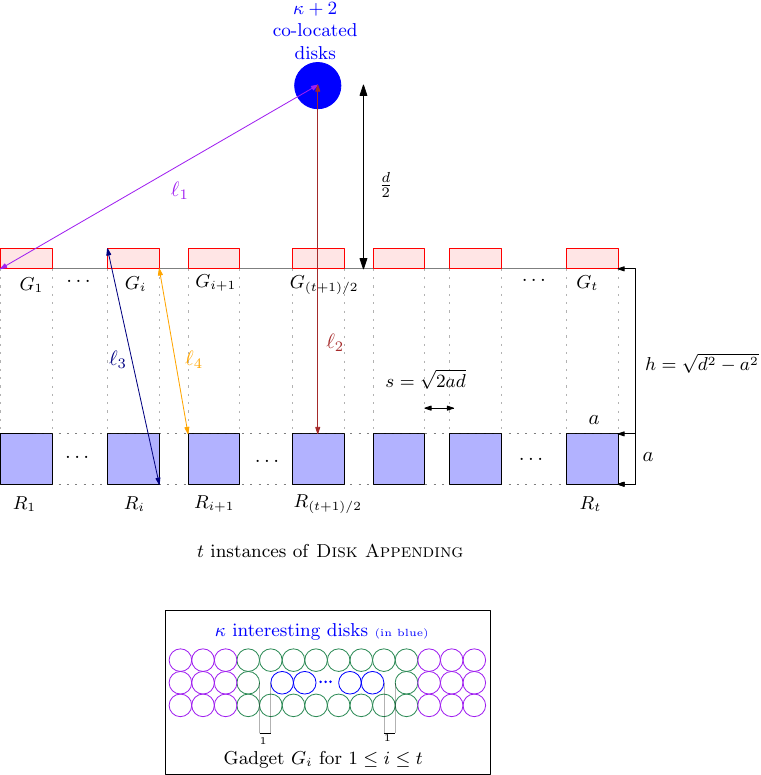}
		\caption{Schematic depiction of an instance of \probRearr obtained by OR-composition of instances $(R_i, \cP_i, n, \kappa)$ of \probCPack. Each instance $(R_i, \cP_i, n, \kappa)$ is shown in a blue square of sidelength $a$. Red rectangles are gadgets $G_i$, and an example gadget is shown below. Lengths $\ell_1, \ell_2, \ell_3, \ell_4$ are defined in \Cref{cl:nopolyker-distances}, and the values of $s$, $h$, and $d$ are carefully chosen functions of $t$ and $a$, in order to ensure that $\ell_1, \ell_3 \le d < \ell_2, \ell_4$ (note that the figure is not to scale). All the empty spaces are filled with padding disks with integral coordinates of centers. This ensures that an interesting disk from $G_i$ cannot be moved into an adjacent $R_{i\pm1}$, and thus different instances remain ``isolated''. In the gadget $G_i$, the \emph{surrounding disks} are shown in green and \emph{interesting disks} are shown in blue. Finally, purple disks are added on either side of the gadget in order to make the total width of the gadget exactly $a$. Then, the gadget $G_i$ and the corresponding square $R_i$ can be horizontally aligned as shown in the figure.} \label{fig:nopolykernel}
	\end{figure}
	
	For every $1 \le i \le t$, we construct a gadget $G_i$ as follows; see  \Cref{fig:nopolykernel}. Let $R$ be a rectangle of height $6$ and width $2\kappa+6$. Suppose the cartesian coordinates of the bottom-left corner of $G_i$ are $(0, 0)$ (note that this coordinate system is defined only for explaining the gadget structure, and should not be confused with the coordinate system in the next paragraph). Then, we place $2(k+3)$ disks centered at points $(1, 1), (3, 1), \ldots, (2k+5, 1)$, as well as $(1, 5), (3, 5), \ldots, (2\kappa+5, 5)$, and $2$ additional disks centered at $(1, 3)$, and $(2\kappa+5, 3)$. These disks lie along the perimeter of the rectangle, with centers at distance $1$ from the perimeter. We call these disks \emph{surrounding disks} (shown in green). Additionally, we place $\kappa$ disks with centers at $(4, 3), (6, 3), \ldots, (2\kappa+2, 3)$, which are termed as \emph{interesting disks} (shown in blue). Note that this leaves a horizontal gap of $1$ between the leftmost (resp.\ rightmost) interesting disk and the surrounding disks with center $(1, 3)$ (resp.\ $(2\kappa+5, 3)$). Now, we pad the gadget horizontally by adding columns of $3$ disks on both sides of the surrounding disks in a symmetric manner, such that the width of the gadget becomes exactly $a$. 
	
	Now we describe the construction of the instance of \probRearr. It might be useful to refer to a schematic description shown in \Cref{fig:nopolykernel}. Let $d$, the distance by which a disk can be moved, be equal to $\frac{9}{4}t^2a^2$. We place the first square $R_1$ and the corresponding packing of disks $\cP_1$ from the first instance by placing the bottom-left of $R_1$ corner at the origin $(0, 0)$. Next, we place the instances $(R_2, \cP_2), (R_3, \cP_3), \ldots, (R_t, \cP_t)$ by aligning their bottom edge along the $x$-axis, and leaving a horizontal gap of $s \coloneqq \sqrt{2ad}$ between the adjacent squares. Then, we place the gadgets $G_i$ directly above the rectangle $R_i$ such that the vertical distance between the top edge of $R_i$ and the top edge of $G_i$ is equal to $h \coloneqq \sqrt{d^2 - a^2}$. Since the width of every gadget $G_i$ is equal to $a$ after padding, the vertical boundaries of $R_i$ and the corresponding $G_i$ are aligned. Next, we place a set $C$ of $\kappa+2$ co-located disks such that (1) the vertical distance between the bottom edge of $G_{(t+1)/2}$ and the centers of disks in $C$ is equal to $d/2$, and (2) the centers of the disks in $C$ are aligned with the horizontal center of the gadget $G_i$. We place a rectangle tightly enclosing the instance constructed thus far, and pack all the empty spaces outside the gadgets using disks with integral coordinates on the centers (not shown in the figure). Finally, we set the budget $k$, the number of disks that can be moved, to be $(\kappa+1)+\kappa = 2\kappa+1$. This finishes the construction of the instance of \probRearr.

	The proof of the following claim follows from the careful choice of $d$, $h$ and $s$ in terms of $a$ and $t$.
	\begin{restatable}{claim}{distanceclaim} \label{cl:nopolyker-distances}
		\begin{enumerate}
			\item The maximum distance between the centers of disks in $C$ and any point in any $G_i$ is at most $d$ (shown as $\ell_1$ in \Cref{fig:nopolykernel}).
			\item The minimum distance between the centers of disks in $C$ and any point in any $R_i$ is more than $d$ ($\ell_2$).
			\item The maximum distance between a point in $G_i$ and a point in the corresponding $R_i$ is at most $d$ ($\ell_3$).
			\item The minimum distance between a point in $G_i$ and a point in another $R_j$ is more than $d$ ($\ell_4$).
		\end{enumerate}
	\end{restatable}
	\begin{proof}
		The values of $d, h$, and $s$ are chosen carefully in terms of $a$ and $t$ in order to ensure these properties. First we observe that $s = \sqrt{2ad} \ge a$, since $d = \frac{9}{4}t^2 a^2$.
		\begin{enumerate}
			\item Note that the horizontal distance between the midpoint of $G_{(t+1)/2}$ and the leftmost point in $G_1$ can be upper bounded by $(t/2)(a+s) \le (t/2) \cdot (2s) = t \cdot \sqrt{2ad}$.  The vertical distance between the bottom edge of $G_{(t+1)/2}$ and the centers of disks in $C$ is $d/2$. Therefore, it suffices to show that $\lr{\frac{d}{2}}^2 + \lr{t \sqrt{2ad}}^2 \le d^2$, i.e., $t^2 \cdot 2ad \le \frac{3d^2}{4}$, i.e., $2at^2 \le \frac{3}{4} \cdot \frac{9}{4} t^2 a^2$. This holds assuming $a \ge 216$.
			\item It suffices to consider the vertical distance between the centers of $C$ and the top edge of $R_{(t+1)/2}$. This vertical distance is $\frac{d}{2} + h - a $, which we want to show is greater than $d$. Note that it suffices to show that $h = \sqrt{d^2 - a^2} > \frac{d}{2}$, i.e., $a^2 < \frac{3d^2}{4}$, i.e., $\frac{243}{64} t^4 a^2 > 1$. However, since $t, a \ge 1$, this is true.
			\item $\ell_3^2 = h^2 + a^2 = d^2 - a^2 + a^2 = d^2$, since $h = \sqrt{d^2 - a^2}$.
			\item It suffices to consider adjacent $G_i$, $R_{i+1}$ pairs (argument for $R_{i-1}$ is identical). Then, $\ell_4^2 = (h-a)^2 + s^2 = (\sqrt{d^2 - a^2} - a)^2 + 2ad = d^2 - 2a\sqrt{d^2 - a^2} + 2ad $, which we want to show is at least $d^2$. This holds since $d > \sqrt{d^2 - a^2}$.
		\end{enumerate}
	\end{proof}

	Now we explain the implications of \Cref{cl:nopolyker-distances}. 
	In any yes-instance, at least $\kappa+1$ disks from $C$ must be moved by a distance at most $d$. Let $C'$ be this set of disks from the set of $\kappa+2$ co-located disks, that are moved. Note that in any gadget $G_i$, if all the $\kappa$ interesting disks are moved, then this creates an available space for placing $\kappa+1$ disks of $C'$. On the other hand, if any set of fewer than $\kappa$ disks inducing a connected component in the \emph{contact graph} (i.e., a special kind of intersection graph wherein there is an edge between the vertices corresponding to two disks iff their boundaries touch each other) of the disks is moved, then this creates space for at most $\kappa$ disks from $C'$ (note that the distance between $C'$ and an $R_i$ is more than $d$ by item 2 of \Cref{cl:nopolyker-distances}). However, since the budget is $2\kappa+1$, this cannot correspond to a feasible solution. Thus, in a solution to a yes-instance, $C'$ can only be moved in the place of $\kappa$ interesting disks corresponding to a gadget $G_i$. Next, an interesting disk can be moved anywhere in the corresponding square $R_i$ (item 3), but cannot be moved to a different square $R_j$ (item 4). Then, using an argument used for the disks in $C'$, we conclude that the $k$ interesting disks can only be moved in the empty spaces in the corresponding $R_i$. Thus, the created instance of \probRearr is a yes-instance iff there exists some yes-instance $(R_i, \cP_i, n, \kappa)$ of \probCPack. Finally, we note that \Cref{cor:compl-hard-disk} implies that the coordinates of the centers of the disks in each instance of \probRearr can be assumed to be rational. Furthermore, by letting $s \approx \sqrt{2ad} $, and $h  \approx \sqrt{d^2 - a^2}$ as rational approximations of their original values with small enough error, we can ensure that the coordinates of all the centers of the disks in the constructed instance become rational, and furthermore, the inequalities from \Cref{cl:nopolyker-distances} continue to hold. This concludes the proof of  \Cref{thm:nopolyker}.
\end{proof}


\section{\probRRearr}\label{sec:rearr-w-hard} 
In this section, we consider \probRRearr, which is the rectilinearly constrained version of \probRearr. We recall the definition of the problem for convenience.

\defproblema{\probRRearr}%
{A family $\cS$ of $n$ unit disks, an integer $k\geq 0$, and a real $d\geq 0$.}%
{Decide whether it is possible to obtain  from $\cS$ a family of non-overlapping disks  $\cP$ by moving at most $k$ disks into new positions parallel to the axes in such a way that each disk is moved at distance at most $d$.}

Note that \Cref{fig:introex2} is also  an example of \probRRearr, where the central disk is moved along the vertical axis. It can be easily verified that our algorithmic results, namely, \Cref{thm:partialKernel} and \Cref{cor:fpt} also hold for \probRRearr. On the other hand, as stated in \Cref{thm:w1hardness}, we show that \probRRearr is $\classW{1}$-hard parameterized by $k$. This section is dedicated to the proof of this theorem, which we restate below for convenience.

\themaiWhards*


\textbf{Proof of Theorem \ref{thm:w1hardness}: Overview.}
To prove that \probRRearr is \classW{1}-hard, we give a parameterized reduction from \probGridTiling, which is known to be \classW{1}-hard~\cite{Marx07a,CyganFKLMPPS15}.

\defproblema{\probGridTiling}%
{Positive integers $n, \kappa$, and a collection $\mathcal{S}$ of $\kappa^2$ non-empty sets $S_{i, j} \subseteq [n] \times [n]$ for $1 \le i, j \le \kappa$.}%
{Find integers $r^*_i \in [n]$ for all $1 \le i \le \kappa$, and $c^*_j \in [n]$ for all $1 \le j \le \kappa$, such that 
	for all $1 \le i, j \le \kappa$, $(r_i, c_j) \in S_{i, j}$.
}

\begin{figure}
	\centering
	\includegraphics[scale=0.8]{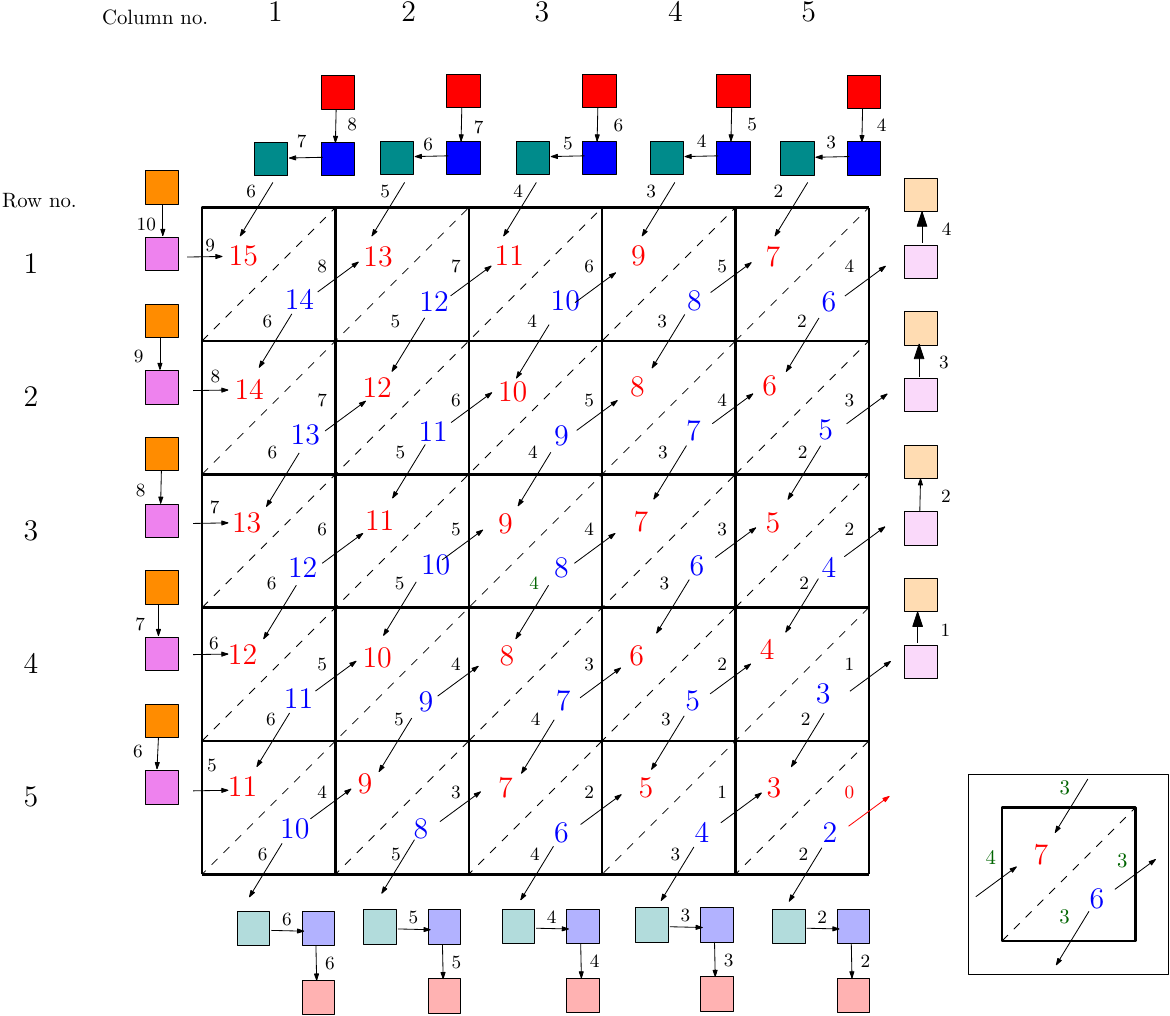}
	\caption{Left: A figure illustrating the skeleton of the instance produced by reduction from \probGridTiling. Right: An example cell $(3, 4)$.		
		The figure on the left represents the schematic of an instance produced via reduction of \probGridTiling with $\kappa = 5$ (note that figure is not to scale). The bulk of the instance is contained in a $5 \times 5$ grid as shown in the figure. Each grid cell corresponds to a cell gadget. There is a row gadget for every row (orange: $R^*(i)$ and purple: $RC(i, \cdot)$) to the left of the grid. Analogously, there is an emptying row gadget for every row except for $i = \kappa$ (faint orange $ER^*(i)$ and faint purple: $ERC(i, \cdot)$). Similarly, column gadgets above the grid and emptying column gadgets below the grid (red: $C^*(j)$, blue: $CC(j, n+1)$, and green: $CC(j, \cdot)$). The arrows between the gadgets represent the number of disks moved from one gadget to another in a yes-instance.
		To explain the various numbers written in each cell gadget, consider an example of a cell on the right, which corresponds to row $i = 3$ and column $j = 4$. Each cell gadget in $(3, 4)$ contains $6$ disks, denoted in blue. There are two incoming arrows into the cell $(3, 4)$, labeled with $3$ and $4$ respectively. This indicates that in a yes-instance, $3$ (resp. $4$) disks will move into one of the pair gadgets in $(3, 4)$ from the vertically (resp.\ horizontally) previous cell $(2, 4)$ (resp.\ $(3, 3)$). These $6$ disks in total (denoted in red) will displace $5$ disks in one of the cell gadgets in a yes-instance. Then, $3$ (resp.\ $3$) disks are moved into the vertically (resp.\ horizontally) next cell $(4, 4)$ (resp.\ $(3, 5)$). This is indicated with outgoing arrows and the respective labels.} \label{fig:gridex1}
\end{figure}

To give a high level overview of the reduction, we make some simplifications to aid the understanding. The idea is to ``embed'' the given instance of \probGridTiling in two dimensional plane using a set of unit disks (see \Cref{fig:gridex1}). Consider an $\kappa \times \kappa$ grid in the plane, which is divided into grid cells $(i, j) \in [\kappa] \times [\kappa]$. At the top of every column $1 \le j \le \kappa$, we create a \emph{column gadget}, containing a set of $c_j \ge 2$ co-located disks. Analogously, for every column $1 \le j \le \kappa$, we create an \emph{emptying column gadget} containing ``free space'' for $c'_j$ disks. However, since each disk can be moved by a distance of at most $d$, either horizontally or vertically, the co-located disks cannot be directly moved into the free spaces. We have an analogous construction of a row gadget to the left of every row $1 \le i \le \kappa$, containing $r_i$ co-located disks, and an emptying row gadget to the right of the row, containing free spaces for $r'_i$ disks. 

Each grid cell is further divided into sub-cells corresponding to pairs $(a, b) \subseteq [n] \times [n]$. For each pair $(a, b)$ that belongs to $S_{i, j}$, we create a \emph{pair gadget} $PG(a, b, i, j)$ in the corresponding sub-cell, containing $m_{ij}$ \emph{interesting disks}. However, this gadget has the property that there is a space for $m_{ij}+1$ disks, i.e., \emph{one additional disk} iff all the $m_{ij}$ interesting disks are moved elsewhere.

Note that any feasible solution must move at least $c_j - 1$ (resp.\ $r_i - 1$) of the co-located disks from every column gadget (resp.\ row gadget), in order to arrive at a non-intersecting configuration. This requires using the extra empty spaces in the pair gadgets $PG(\cdot, \cdot, \cdot, \cdot)$, as well as that in the emptying row and column gadgets. However, recall that a disk can only be moved horizontally or vertically by a distance of at most $d$. Thus, the co-located disks from the column gadget of column $j$ must be moved into a pair gadget corresponding to row $1$ and column $j$, say, $PG(a, b_j, 1, j)$, such that $(a, b_j) \in S_{1j}$. Due to the properties of the constructed instance, a feasible solution must make use of pair gadgets $PG(\cdot, b_j, i, j)$ for \emph{all rows} $1 \le i \le \kappa$, i.e., it enforces the choice of subcells of the form $(\cdot, b_j)$ for the column $j$. The argument for the row gadgets is analogous, which, for a row $1 \le i \le \kappa$, enforces the choice of subcells of the form $(a_i, \cdot)$ for all columns. 

If such consistent choices $1 \le a_i, b_j \le n$ exist for each row and column $1 \le i, j \le \kappa$, such that $(r_i, c_j) \in S_{i, j}$, then it is possible to move at most $k = f(\kappa)$ disks, either horizontally or vertically by a distance of at most $d$, to arrive at a non-intersecting configuration. On the other hand, if the given instance of \probGridTiling is a no-instance, then, the budget $k$ on the number of disks can be moved, is chosen in such a manner, that there exists no feasible solution that can achieve at a non-intersecting configuration by moving at most $k$ disks. This finishes the overview of the proof idea. Now, we turn to the formal proof.

\begin{proof}[Proof of Theorem \ref{thm:w1hardness}: Technical details.]
Now we discuss the details of the construction of the instance of \probRRearr obtained from \probGridTiling.


\medskip\noindent\textbf{Pair Gadget.} Consider some cell $(i, j)$, corresponding to row $1 \le i \le n$, and column $1 \le j \le n$. We create a pair gadget $PG(a, b, i, j)$ for every $(a, b) \in S_{i, j}$, see Figure \ref{fig:pair-gadget}. The exterior of a pair gadget is formed by unit disks arranged in a rectangular shape,. These disks are called \emph{Surrounding Disks} (shown in green). The width of a rectangle is $3$ unit disks, and the height is $L$ unit disks, for some large positive integer $L$, thus there are total $2L+2$ surrounding disks in total. For the sake of simplicity, let us assume that the cartesian coordinates of the center of the unit disk corresponding to the bottom-left (resp.\ top-right) of the rectangle is $(0, 0)$ (resp.\ $(2L+1, 4)$). There is an empty rectangle of dimension $(2L-2) \times 2$ inside the surrounding disks. We place two types of unit disks in this space: \emph{padding} (shown in red) and \emph{interesting} (shown in blue). The number of interesting disks in $PG(a, b, i, j)$ is equal to $m_{ij} \coloneqq 3\kappa-i-2j+2$. Note that the maximum value of $m_{ij}$ is equal to $3\kappa-1$ for $i = j = 1$, which we denote by $M$.

If column $j$ is odd, we place $N_1$ padding disks arranged vertically, starting from $(2, 2)$, where $N_1 \coloneqq \lfloor L/3 \rfloor$. If column $j$ is even, we place $N_1$ padding disks arranged vertically, starting from $(2, 3)$. We leave a vertical gap of $1$, and place $m_{ij}$ interesting blue disks above the $N_1$ padding disks (of type 1). Thus, the center of lowest interesting disk is given by $(2, 2 + 2N_1 + (j \mod 2))$. Then, we leave a vertical gap of $1$ above the highest interesting disk, and fill the remaining vertical space by padding disks (of type 2) touching each other vertically. Let the number of the type 2 disks be $N_2(i,j)$ -- this is a function of $i$ and $j$ because $L$ is fixed, whereas $m_{ij}$ is a function of $i$ and $j$. Note that if $j$ is odd, the highest type $2$ disks touches a surrounding disk; whereas if $j$ is even, there is a vertical gap of $1$.

Observe that there is a vertical gap of $1$ above and below the interesting disks in each pair gadget. Therefore, if all $m_{ij}$ interesting disks in $PG(a, b, i, j)$ are moved elsewhere, there is space for exactly $m_{ij}+1$ unit disks, although the centers of these potentially new disks will be vertically offset by $1$. Furthermore, the centers of interesting disks are vertically offset by a distance of $1$ for horizontally adjacent columns. Therefore, by moving distributing $m_{ij}$ disks into $PG(a, \cdot, i, j+1)$ and $PG(\cdot, b, i+1, j)$, and bringing in disks from $PG(a, \cdot, i, j-1)$, and $PG(\cdot, b, i-1, j)$, respectively, we gain space for exactly one additional disk inside $PG(a, b, i, j)$. This property is crucially used in the reduction.

\medskip\noindent\textbf{Absent Pair Gadget.} For all $(a', b') \not\in S_{i, j}$, we create an absent pair gadget $APG$ that has the same external dimension as a pair gadget, but is completely filled with surrounding and padding disks. Thus, an absent pair gadget consists of $L \times 3$ disks arranged in a rectangular grid. Note that the dimensions of a pair gadget as well as an absent pair gadget are $2L \times 6$.

\begin{figure}[H]
	\centering
	\includegraphics[scale=0.5]{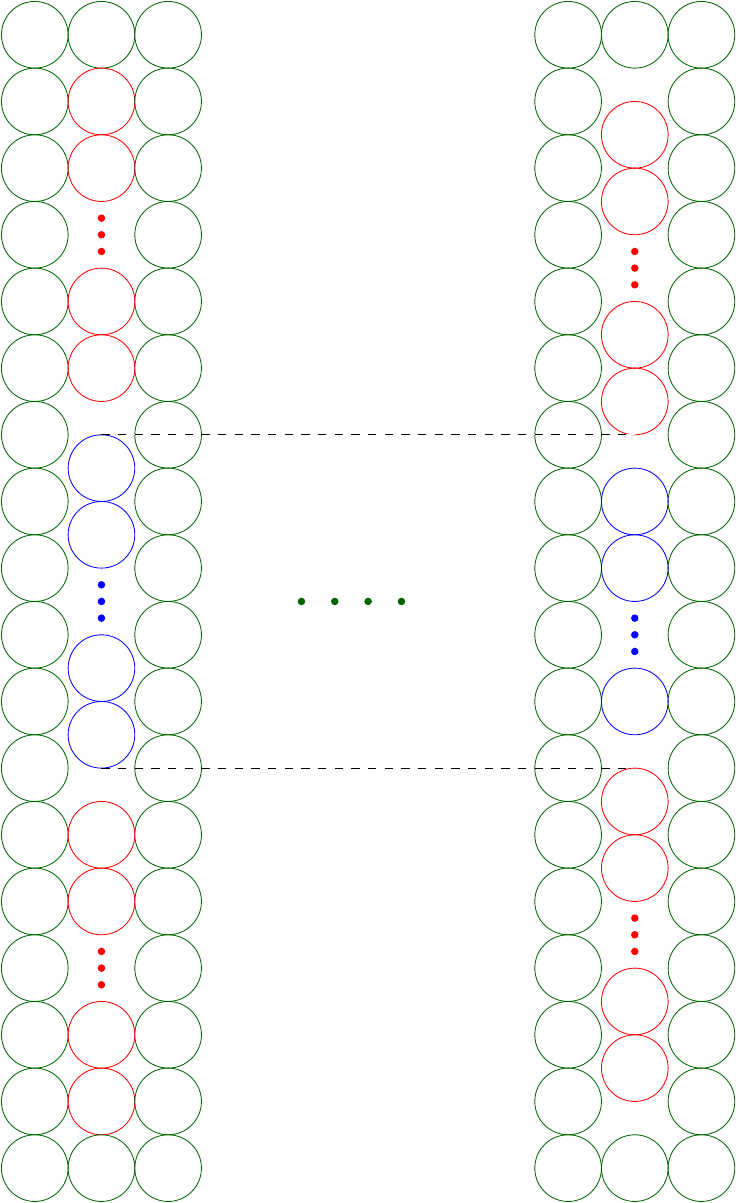}
	\caption{Two pair gadgets in horizontally adjacent cells, corresponding to an odd and even column. The disks in green are surrounding disks. There are padding disks (red) and interesting disks (blue) inside the contained formed by surrounding green disks. Note the vertical gaps of $1$ between padding and interesting disks, as well as additional vertical gaps of $1$ between padding and surrounding disks in the gadget corresponding to an even column on the right.}
	\label{fig:pair-gadget}
\end{figure}

Next we describe the construction of cell, row and column gadgets. For this, it will be useful to refer to Figure \ref{fig:gridex}. Note that the figure is not to scale -- in particular, the shapes of cell and grid gadgets are denoted by ``squares'', whereas our construction makes them into ``tall rectangles''. Furthermore, the large distances between cell/row/column gadgets are not represented to scale. Nevertheless, the figure should help the reader visualize the placement of various cell, row and column gadgets relative to each other.

\begin{figure}[H]
	\centering
	\vspace{-0.5cm}
		\includegraphics[scale=0.6]{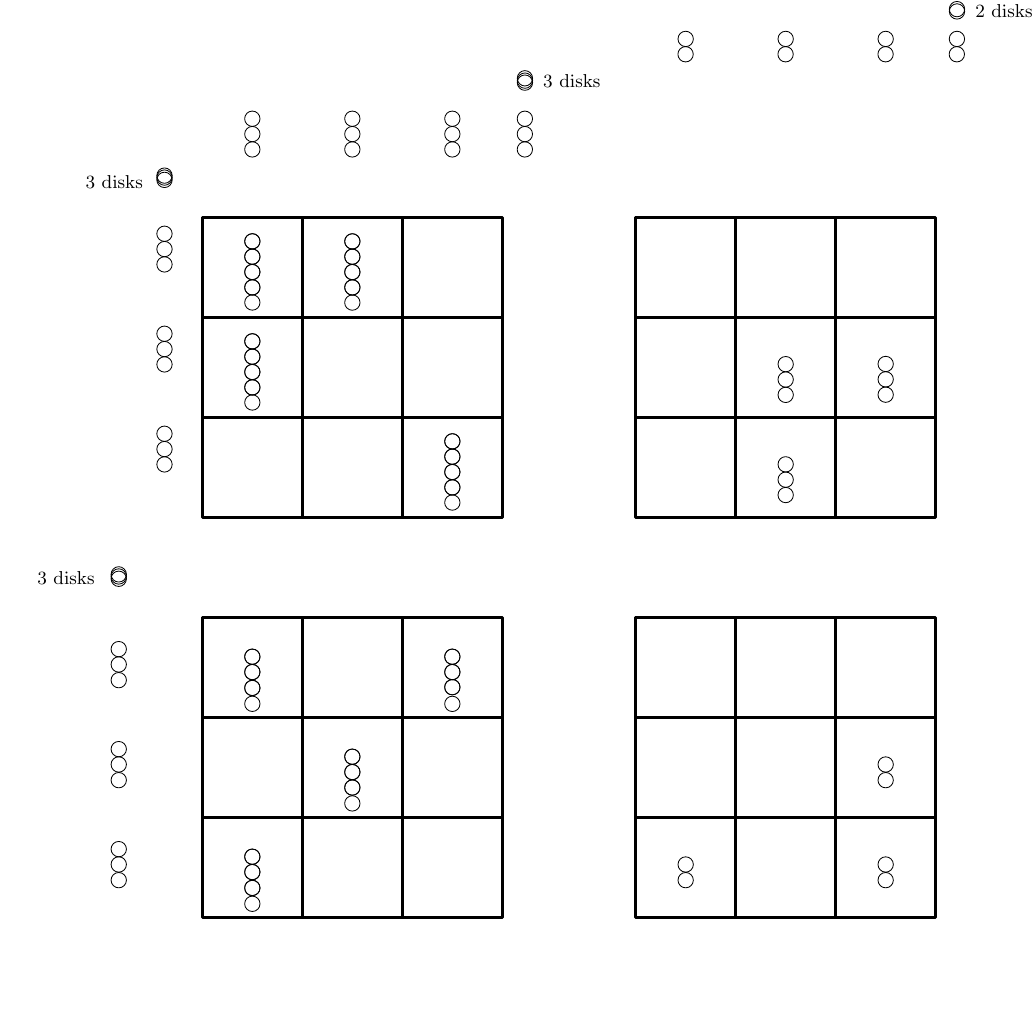}
	\hrule\ \\ 
	\vspace{0.5cm}
		\includegraphics[scale=0.6]{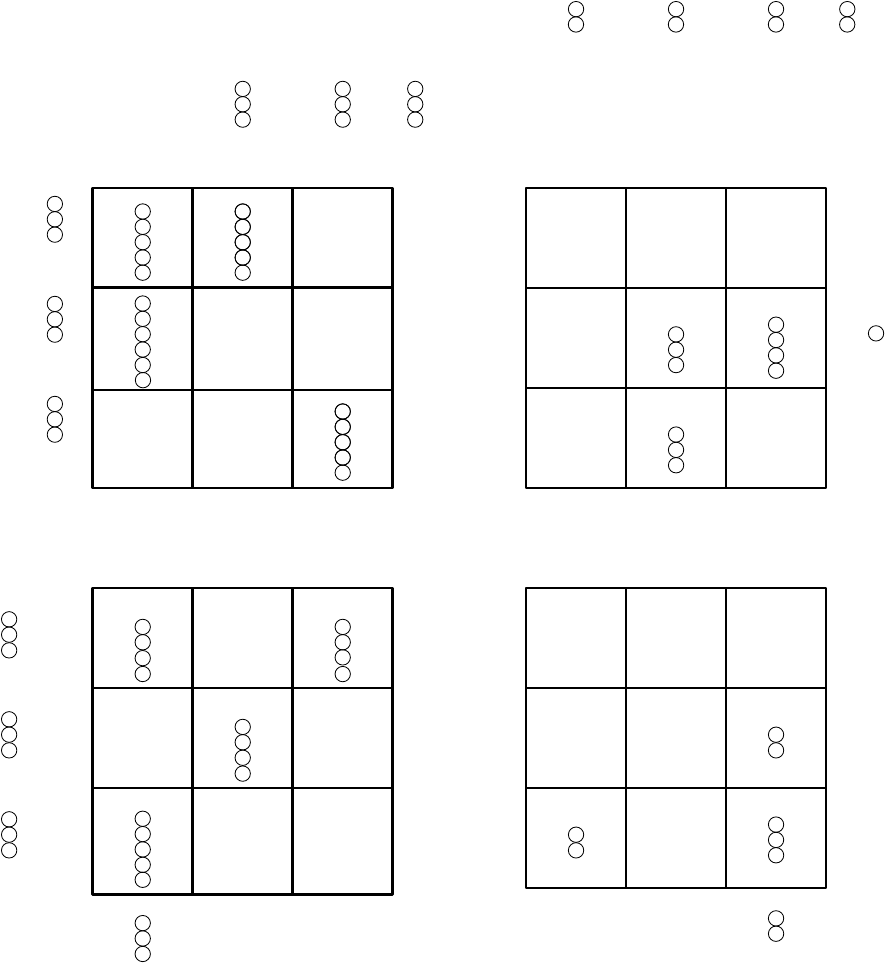}
	\caption{Consider the following instance of \probGridTiling. $n = 3$ and $\kappa = 2$, and let $S_{1, 1} = \{ (1, 1), (1, 2), (2, 1), (3, 3)\}, S_{1, 2} = \{ (2, 2), (2, 3), (3, 2) \}, S_{2, 1} = \{ (1, 1), (1, 3), (2, 2), (3, 1)\}, S_{2, 2} = \{ (2, 3), (3, 1), (3, 3) \}$. Note that this is a yes-instance with the following solution: $R_1 = 2, R_2 = 2, C_1 = 1, C_2 = 3$. The figure on the top shows the instance produced by the reduction. The figure on the bottom shows a corresponding solution after moving some of the disks rectilinearly by distance at most $d$, such that in the resulting configuration, no disks intersect. Note that the blank spaces are filled with ``irrelevant'' non-intersecting disks.}
	\label{fig:gridex}
\end{figure}

\medskip\noindent\textbf{Cell Gadget.} Consider a cell $(i, j)$ corresponding to a row $1 \le i \le \kappa$, and column $1 \le j \le \kappa$. We construct a cell gadget as follows. For each $(a, b) \in [n] \times [n]$, if $(a,b) \in S_{i, j}$, let $G(a, b, i, j) = PG(a, b, i, j)$ be the pair gadget, otherwise, let $G(a, b, i, j) = APG$ be the absent pair gadget.
Then, we arrange the gadgets $G(a, b, i, j)$ in an $n \times n$ rectangular grid, starting from the gadget $G(1, 1, i, j)$ in the top-left corner. Note that the dimensions of a cell gadget are $(2L n) \times (6n)$. 

\medskip\noindent\textbf{Arranging cells in a grid.} Now we arrange $\kappa \times \kappa$ cells in a grid in the following way. Starting from grid cell $(1, 1)$, corresponding to $i = j = 1$ is placed at the top-left corner. The cell gadgets of cells in the same row (resp.\ column) are aligned horizontally (resp.\ vertically), with the padding distance between the adjacent horizontal (resp.\ vertical) cell gadgets being $H$ (resp.\ $V$). As a result, for a fixed column $j$, all the pair gadgets of the form $PG(a, \cdot, i, j)$ are also aligned vertically for $1 \le i \le \kappa$, and analogously, for a fixed row $i$, all the pair gadgets of the form $PG(\cdot, b, i, j)$ are aligned horizontally for $1 \le j \le \kappa$. 

Now we specify the value of horizontal and vertical padding distances $H$ and $V$, and observe some properties. Let $V = 4nL$, and $H = 6nL - 12n > 0$, assuming $L$ is sufficiently large (the value of $L$ is later fixed to be $100 \cdot \max\{n, \kappa\}$). We set the distance $d$ by which any disk can be moved vertically or horizontally, to be $6nL$, i.e., $d \coloneqq 6nL$. Note that $d$ is \emph{not} a parameter of \probRRearr.

Now, notice that the vertical distance between a point in $PG(1, b, i, j)$ and that in $PG(n, b, i+1, j)$ is at most $2nL + V = 6nL = d$. On the other hand, the vertical distance between a point in $PG(n, b, i, j)$ and that in $PG(1, b, i+2, j)$ is at least $nL + 2V = nL + 8nL = 9nL > d$. Thus, a disk from a cell gadget can only be moved vertically into the cell gadget of vertically adjacent cells.

Similarly, the horizontal distance between a point in $PG(a, 1, i, j)$ and that in $PG(a, n, i, j+1)$ is at most $12n + H = 6nL = d$, whereas the horizontal distance between a point in $PG(a, n, i, j)$ and that in $PG(a, 1, i, j+2)$ is at most $6n + 2H = 12nL - 6n > d$, assuming $L$ is large enough. Thus, a disk from a cell gadget can only be moved horizontally into the cell gadget of horizontally adjacent cells.

\medskip\noindent\textbf{Row Gadget.} For a row $1 \le i \le n$, let $r_i \coloneqq 2\kappa-i$. The row gadget corresponding to row $i$ will consist of $n$ row-cell gadgets $RC(i, a)$ for $1 \le a \le n$. The structure of $RC(i, a)$ is the same as that of a pair gadget $PG$ corresponding to an even column -- it has dimensions $2L \times 6$, and contains surrounding, padding, and interesting disks respectively. The surrounding disks, and the padding disks of type $1$ are placed exactly as in a pair gadget $PG$. Then, we place $r_i$ interesting disks above the padding disks of type $1$ above the type $1$ padding disks, after vertical empty space of height $1$. The number of interesting disks in $RC(i, a)$ is equal to $r_i$. Then, we leave a vertical empty space of height $1$, and place an appropriate number of padding disks of type $2$. Note that similar to an even column, there is a vertical gap of $1$ at the top and bottom. The row gadgets $RC(i, a)$ are aligned with each other horizontally, and are placed at a horizontal distance of $4+ 6(i-1)$ to the left of the leftmost green \emph{surrounding} disks in the gadgets of the first column. Furthermore, the gadget $RC(i, a)$ is aligned vertically with the gadgets $PG(a, \cdot, i, \cdot)$.

Finally, we create an additional gadget $R^*(i)$ aligned horizontally with $RC(i, \cdot)$, of dimension $2L \times 6$. Like above, it contains $N_1$ padding disks of type $1$. Then, we place $r_i+2$ \emph{co-located} interesting disks above the top-most type $1$ padding disk, without leaving any vertical empty space. Above these $r_i+1$ interesting disks, we place an appropriate number of padding disks of type $2$, without leaving any vertical gap. Note that unlike $RC(i, a)$, there are no empty spaces of height $1$ inside $R^*(i)$. The gadget $R^*(i)$ is placed directly above $RC(i, 1)$. Note that the row gadgets corresponding to different rows are not aligned horizontally. This finishes the description of row gadget $R(i)$. 

Note that the vertical distance between a disk in $R^*(i)$, and that in $RC(i, n)$ is at most $2(n+1)L \le 6nL$. Therefore, the disks in $R_i$ can be moved vertically in the place of any of the disks in $RC(i, \cdot)$. The horizontal distance between a disk in $RC(i, a)$ and a point in $PG(a, n, i, 1)$ is at most $6n + 6\kappa + 6 \le d$, thus the disks in $RC(i, a)$ can be moved horizontally into any of the pair gadgets $PG(a, \cdot, i, 1)$. Finally, the horizontal distance between a disk in $RC(i, a)$ and a point in $PG(a, 1, i, 2)$ is at least $2H + 6n + 6 > d$, which implies that a disk in $RC(i, a)$ cannot be moved horizontally into a pair gadget of the second column. 

\medskip\noindent\textbf{Column Gadget.} The idea behind column gadgets is similar to that of row gadgets, with a few differences. Let $c_j = \kappa-j+2$. The column gadget corresponding to column $j$ will consist of $n$ column-cell gadgets $CC(j, b)$ for $1 \le b \le n$. The structure of a column cell gadget $CC(j, b)$ is exactly like a row cell gadget $RC(i, a)$, except that the number of interesting disks in $CC(j, b)$ is equal to $c_j$ The column gadgets $CC(j, b)$ are aligned vertically, and are placed at a vertical distance of $4 + (i-1)2L$ above of the topmost \emph{surrounding} disk in the gadgets of the first row. Furthermore, the gadget $CC(i, a)$ is aligned horizontally with the gadgets $PG(a, \cdot, i, \cdot)$.

Finally, we create two additional gadgets. The first is $CC(j, n+1)$, which resembles $CC(j, b)$ for $1 \le b \le n$, except that it contains $c_j+1$ interesting disks at appropriate locations. Furthermore, $CC(j, n+1)$ is vertically aligned with other gadgets $CC(j, b)$. The second gadget is $C^*(j)$, which is similar to $R^*(i)$, except the number of co-located disks is equal to $c_j+3$. This gadget is placed directly above $CC(j, n+1)$. This finishes the description of the column gadgets $C(j)$.

Note that the horizontal distance between a disk in $C^*(j)$ and a point in $CC(j, n+1)$ is at most $4L < d$, thus the disks in $C^*(j)$ can be moved vertically into $CC(j, n+1)$. Then, the horizontal distance between a point in $CC(j, n+1)$ and a point in $CC(j, 1)$ is at most $6(n+1) < d$, and thus the disks in $CC(j, n+1)$ can be moved into any of the $CC(j, \cdot)$. The vertical distance between a point in $CC(j, b)$ and a point in $PG(n, b, 1, j)$ is at most $2\kappa L + 2nL + 4 < d$, and thus disks in $CC(j, b)$ can be moved in the place of $PG(\cdot, b, 1, j)$. Finally, the vertical distance between a point in $CC(j, b)$ and a point in $PG(1, b, 2, j)$ is at least $2nL + V + 4 = 6nL + 4 > 6nL = d$. Therefore, a disk in $CC(j, b)$ cannot be moved into a pair gadget of the second row.

\begin{figure}
	\centering
	\includegraphics[scale=1.2]{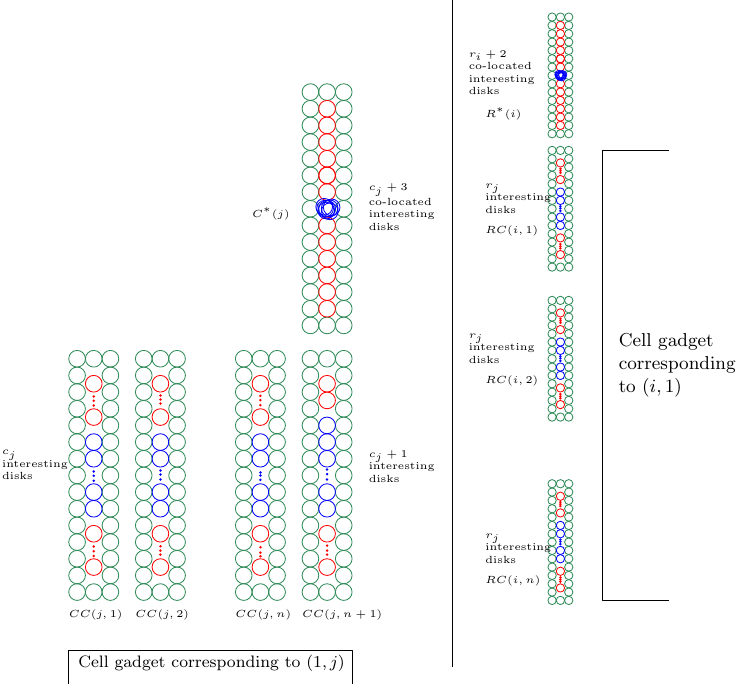}
	\caption{Left: Column gadget $C(j)$ corresponding to column $j$. Right: Row gadget corresponding to row $i$. Note that the vertical and horizontal gaps between sub-gadgets are only shown for aiding the visualization. There are no horizontal gaps between $CC(j, \cdot)$'s and no vertical gaps between $CC(j, n+1)$ and $C^*(j)$, as well as no vertical gaps between $RC(i, \cdot)$'s and $R^*(i)$. Also, the row gadget is compressed in size for compactness, but the dimension of sub-gadgets is the same as in column the gadget.}
\end{figure}

\medskip\noindent\textbf{Emptying Gadgets.} Corresponding to a row $1 \le i < \kappa$, we create an emptying row gadget $ER(i)$. Informally speaking, these gadgets are used to ``collect'' the leftover disks after a series of movements through the grid. Each of this gadget $ER(i)$, greatly resembles the row gadget as described above (note that for $i = \kappa$, we do not create $ER(\kappa)$). The emptying row gadgets $ER(i)$ are positioned symmetrically to the right side of the grid, mirroring the locations of the corresponding row gadgets $R(i)$ w.r.t.\ the grid. The external structure, and the dimensions of $ER(i)$ are same as that of $R(i)$. Furthermore, $ER(i)$, like $R(i)$, consists of emptying row-cell gadgets $ERC(i, a)$ for every $1 \le i \le a$, and an additional gadget $ER^*(i)$. 
Now we describe the internal differences between $ER(i)$ and $R(i)$.
\begin{itemize}
	\item The number of interesting disks in each $ERC(i, a)$ is equal to $er_i = \kappa-i$. The number of padding disks of type $2$, located above the interesting disks, is adjusted accordingly.
	\item The external structure, and the number of surrounding and padding disks in $ER^*(i)$ is the same as $ERC(i, \cdot)$. The only difference is that, instead of $\kappa-i$ interesting disks, it contains an empty space of height $2(\kappa-i)$ and width $2$. There is an appropriate number of padding disks of type $1$ and $2$, below and above the empty space, respectively.
\end{itemize}
Finally, corresponding to a column $1 \le j \le \kappa$, we create an emptying column gadget $EC(j)$, which resembles column gadget $C(j)$ (note that we create the gadget $CC(\kappa)$ for $i = \kappa$). The emptying column gadgets $EC(j)$ are positioned below the grid, and mirror the locations of the corresponding column gadgets $C(j)$ w.r.t.\ the grid. $EC(j)$, like $C(j)$, contains emptying column-cell gadgets $EC(j, b)$ for $1 \le b \le n$, and two additional gadgets $EC(j, n+1)$, and $EC^*(j)$. Just like $CC(j, \cdot)$, the number of interesting disks in $ER(j, b)$ is equal to $ec_j = \kappa-j+2$, for all $1 \le b \le n+1$, and the number of padding disks of type $2$ are adjusted accordingly. 

The only difference between $C(j)$ and $EC(j)$ is in the gadget $EC^*(j)$. The external structure, and the number of surrounding and padding disks in $EC^*(j)$ is the same as $ECC(j, \cdot)$. The only difference is that, instead of $\kappa-j+2$ interesting disks, it contains an empty space of height $2(\kappa-j+2)$ and width $2$. There is an appropriate number of padding disks of type $1$ and $2$, below and above the empty space, respectively.

Finally, we create a rectangle that precisely encloses all of the gadgets created above. Note that the coordinates of the vertices of the rectangle are integral. We pack all the empty spaces within the rectangle, but outside any of the created gadgets, using padding disks with centers with integer coordinates, such that the distance between the consecutive centers is $2$. Note that the dimensions of the rectangle are $O(n^3 \kappa) \times O(n^2 \kappa)$, thus, the size of the instance created is polynomial in the input size.

\medskip\noindent\textbf{Computing the value of the parameter $k$.} Recall that $k$ denotes the maximum number of disks that can be moved vertically or horizontally by the distance at most $d$. Now, the value of $k$ is defined to be
\begin{align*}
	\sum_{i = 1}^\kappa (r_i+r_i+1) + \sum_{j = 1}^\kappa (c_j + c_j + 1 + c_j+2) + \sum_{i = 1}^\kappa er_i + 2\sum_{j = 1}^\kappa ec_j + \sum_{i = 1}^\kappa \sum_{j = 1}^\kappa m_{ij}
\end{align*}
Recall that for any $1 \le i \le \kappa$, and $1 \le j \le \kappa$, the quantities in the summands are defined as follows: $r_i = 2\kappa-i$, $er_i = \kappa-i$, $c_j = ec_j = \kappa-j+2$, and $m_{ij} = 3\kappa-i-2j+2$. Note that each of the quantities in the summands is $O(\kappa)$, which implies that $k = O(\kappa^3)$. Finally, we fix the value of $L$ to be $100 \cdot \max \{n , k\}$. Note that this also fixes the value of $L$, and thus of the distance $d$. This finishes the construction of the reduced instance of \probRRearr. 

In the following discussion, we will explain how we arrive at the value of $k$, which will also demonstrate how a yes-instance of \probGridTiling corresponds to a yes-instance of \probRRearr. In this discussion, ``moving a \emph{disk} from gadget $A$ to gadget $B$'' always refers to moving an \emph{interesting disk} from gadget $A$ to gadget $B$, either vertically or horizontally by distance at most $d$. 

The first term corresponds to moving $r_i+1$ interesting disks from $R^*(i)$ to one of the $RC(i, a)$, and then $r_i$ disks from $RC(i, a)$ to some $PG(a, \cdot, i, 1)$, for every row $1 \le i \le \kappa$. Similarly, the second term corresponds to moving $c_j+1$ disks from $C^*(j)$ to $CC(j, n+1)$, from $CC(j, n+1)$ to some $CC(j, b)$, and then from $CC(j, b)$ to some $PG(\cdot, b, 1, j)$, for every column $1 \le j \le \kappa$. Note that there are $r_i+2$ (resp.\ $c_j+2$) mutually intersecting interesting disks in every $R^*(i)$ (resp.\ $C^*(j)$), and thus any solution has to move at least $r_i+1$ (resp.\ $c_j+1$) disks out of $R^*(i)$ (resp.\ $C^*(j)$).

The third term corresponds to moving $er_i$ disks from $ERC(i, a)$ to $ER^*(i)$ for every row $1 \le i \le \kappa$ (note that $er_1 = 0$). The fourth term corresponds to moving $ec_j$ disks from $ECC(j, b)$ to $ECC(j, n+1)$, and then from $ECC(j, n+1)$ to $EC^*(j)$, for every column $1 \le j \le \kappa$. Note that the gadgets $ER^*(i)$ and $EC^*(j)$ contain space for exactly $er_i$ and $ec_j$ interesting disks at the appropriate locations.

The last term corresponds to moving out $m_{ij} = 3\kappa-i-2j+2$ interesting disks out of one of the pair gadgets corresponding to cell $(i, j)$, say $PG(i, j, a, b)$. Out of these, $\kappa-j+2$ disks will be moved to a pair gadget $PG(i+1, j, \cdot, b)$ in the next row, and $2\kappa-i-j$ disks will be moved to a pair gadget $PG(i, j+1, a, \cdot)$ in the next column. If $i = \kappa$ (resp.\ $j = \kappa$), then $\kappa-j+2$ (resp.\ $\kappa-i$) disks will be moved to some emptying column gadget $EC(j, b)$ (resp.\ emptying row gadget $EC(i, a)$). 

Note that moving out all $m_j$ interesting disks creates a space for one additional disk between padding disks of type $1$ and type $2$. This space will be filled by incoming disks from previous gadgets, as follows. From a gadget $PG(i-1, j, \cdot, b)$, we move $\kappa-j+2$ disks into $PG(i, j, a, b)$, and from a gadget $PG(i, j-1, a, \cdot)$, we move $2\kappa-i-(j-1)$ disks into $PG(i, j, a, b)$. Note that the total number of incoming disks is $\kappa-j+2+2\kappa-i-(j-1) = 3\kappa-i-2j+3 = m_{ij}+1$. Thus, we use the additional space created. For $i = 1$ (resp.\ $j = 1$), the space in $PG(1, j, a, b)$ will be taken by $c_j = \kappa-j+2$ (resp.\ $r_i = 2\kappa-i$) disks in one of the column-cell (resp.\ row-cell) gadgets $CC(j, b)$ (resp. $RC(i, a)$). It is easy to verify that the number of incoming disks for $i = 1$ or $j = 1$ is also $m_{ij}+1$.

This discussion demonstrates that, if there exists a solution $r^*_i, c^*_j$ for $1 \le i, j \le n$, then it is possible to move exactly $k$ interesting disks either horizontally or vertically by a distance of at most $d$ between appropriate gadgets, to arrive at a configuration where no disks intersect. Thus, we have the following claim.

\begin{lemma}
	Assuming the original instance of \probGridTiling was a yes-instance, then the reduction produces a yes-instance of \probRRearr.
\end{lemma}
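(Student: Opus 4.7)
The plan is to exhibit an explicit rectilinear movement scheme parameterized by the \probGridTiling solution $(r^*_1,\ldots,r^*_\kappa, c^*_1,\ldots,c^*_\kappa)$, and then verify three properties: (i) the total number of moves is at most $k$, (ii) each individual move is axis-parallel of length at most $d$, and (iii) the resulting configuration is a packing. The movement scheme follows the informal flow laid out in the paragraphs preceding the lemma, decomposed into three phases. \emph{Injection}: for each row $i$, move $r_i+1$ co-located disks from $R^*(i)$ vertically down into $RC(i, r^*_i)$, then move the $r_i$ original interesting disks of $RC(i, r^*_i)$ horizontally into $PG(r^*_i, c^*_1, i, 1)$; symmetrically for each column $j$, move $c_j+2$ disks from $C^*(j)$ into $CC(j, n+1)$, then $c_j+1$ disks from $CC(j, n+1)$ into $CC(j, c^*_j)$, and finally $c_j$ disks from $CC(j, c^*_j)$ into $PG(r^*_1, c^*_j, 1, j)$. \emph{Grid propagation}: for each cell $(i,j)$, move all $m_{ij}$ interesting disks of the pair gadget $PG(r^*_i, c^*_j, i, j)$ --- which exists because $(r^*_i, c^*_j)\in S_{ij}$ --- sending $\kappa-j+2$ disks vertically into the gadget at row $i{+}1$ (or into $ECC(j, c^*_j)$ if $i=\kappa$) and $2\kappa-i-j$ disks horizontally into the gadget at column $j{+}1$ (or into $ERC(i, r^*_i)$ if $j=\kappa$). \emph{Drain}: for every row $i<\kappa$ move $er_i$ disks from $ERC(i, r^*_i)$ into the empty slots of $ER^*(i)$, and for every column $j$ move $ec_j$ disks from $ECC(j, c^*_j)$ into $ECC(j, n+1)$ and then into $EC^*(j)$.

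Counting is a direct verification: summing the above moves reproduces exactly the expression defining $k$, using $r_i=2\kappa-i$, $c_j=ec_j=\kappa-j+2$, $er_i=\kappa-i$, and $m_{ij}=3\kappa-i-2j+2$. For the per-move distance bound, each move listed above is one of the vertical or horizontal adjacencies whose length-at-most-$d$ property was already verified at the point where the corresponding gadgets were placed: $R^*(i)$-to-$RC(i,\cdot)$ vertical distance $\le 2(n+1)L$, $RC(i,a)$-to-$PG(a,\cdot,i,1)$ horizontal distance $\le 6n+6\kappa+6$, $CC$-to-$PG$ vertical distance $\le (2\kappa+2n)L+4$, inter-cell distances $\le 2nL+V$ horizontally and $\le 12n+H$ vertically, and the symmetric statements for the emptying side. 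All of these are bounded by $d=6nL$ by the choice of $L=100\max\{n,k\}$.

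The main technical obstacle is verifying non-overlap of the final configuration, which is done gadget-by-gadget. For each ``used'' pair gadget $PG(r^*_i, c^*_j, i, j)$, the evacuation of all $m_{ij}$ interesting disks leaves, between the type-$1$ and type-$2$ padding, a vertical empty corridor that accommodates exactly $m_{ij}+1$ non-overlapping unit disks at positions offset by $1$ from the original interesting disks; the $(\kappa-j+2)+(2\kappa-i-(j-1))=m_{ij}+1$ disks arriving from the vertically and horizontally previous cells fit into these slots because the alternating vertical-offset convention tied to column parity is exactly what is needed to reconcile the evacuated positions of one gadget with the arriving positions from its predecessors. The same local check, but without any pass-through, shows that $R^*(i), RC(i,\cdot), C^*(j), CC(j,\cdot)$ and their emptying counterparts become packings after the prescribed moves; in particular $ER^*(i)$ and $EC^*(j)$ were designed to contain precisely the empty space needed to absorb the $er_i$ and $ec_j$ incoming disks. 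Finally, the untouched pair gadgets, the absent pair gadgets $APG$, and the exterior padding disks were already non-overlapping by construction and are not affected, so the global configuration is a packing, completing the proof.
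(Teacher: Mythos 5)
Your proposal is correct and takes essentially the same route as the paper: the paper's proof of this lemma is a one-sentence appeal to the preceding discussion of how the budget $k$ was computed, and your phased movement scheme (injection from $R^*$/$C^*$ into row/column cells, propagation across pair gadgets according to $(r^*_i,c^*_j)$, drainage into the emptying gadgets) is a cleaner, more explicit organization of precisely that discussion, with the same splits $m_{ij}=(\kappa-j+2)+(2\kappa-i-j)$ and the same per-move distance certificates. The one place where you go beyond the paper is pinning down that $c_j+2$ (rather than the paper's briefly stated $c_j+1$) disks leave $C^*(j)$, which is what the $k$-formula's $(c_j)+(c_j+1)+(c_j+2)$ term actually requires given that $C^*(j)$ holds $c_j+3$ co-located disks; that is a correction, not a deviation.
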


\medskip\noindent\textbf{No instances.} First, as discussed above, at least $r_i+1$ (resp.\ $c_j+2$) mutually intersecting interesting disks from every $R^*(i)$ have to be moved in any feasible solution. Note that the total number of these disks is $\sum_{i = 1}^\kappa 2\kappa-i+1 + \sum_{j = 1}^\kappa \kappa-j+4 = 2\kappa^2 + 4$. On the other hand, the total amount of empty space in the emptying row and column gadgets is equal to $\sum_{i = 1}^\kappa \kappa-i + \sum_{j = 1}^\kappa \kappa-j+2 = \kappa^2 -\kappa + 2$. Therefore, at a high level, we need to ``gain'' space for $\kappa^2 + \kappa + 2$ disks. It can be seen that, only way to ``gain'' space for one additional disk is to displace interesting disks from various gadgets. This is because, any surrounding or padding disk is placed in a configuration containing $\Omega(L) \gg k$ disks placed next to one another. Therefore, any solution that moves only at most $k$ disks cannot ``gain'' space by moving a surrounding or a padding disk.

Let $S_i$ be a set of $r_i+1$ disks from $R^*(i)$ that are moved (the argument for columns will be analogous). Note that all the eligible space, except that in the gadgets $RC(i, \cdot)$, is packed with padding or surrounding disks. However, since there are no gaps between such disks, if any subset of these disks inducing a connected component in the ``adjacency UDG'' containing at most $1 \le t \le r_i+1$ disks are moved, this will create space for exactly $t$ disks. However, the value of $k$ is chosen in such a way that, in order to arrive at a non-intersecting configuration, if $t$ disks are moved into new locations, it must displace at most $t-1$ other disks from their original locations; and furthermore, new locations of a set of disks cannot be a permutation of their old locations. For the sake of brevity, we refer to this argument as the \emph{equal displacement argument}.

Thus, the disks in $S_i$ can only be moved in one of the gadgets $RC(i, \cdot)$. Again, the disks in $S_i$ are aligned with the middle column of $RC(i, \cdot)$, consisting of a large number (i.e., ($\gg k$) of padding disks of types $1$ and $2$ respectively.  Therefore, the only feasible choice is to move $r_i+1$ interesting disks from one of the gadgets $RC(i, a)$ for some $1 \le a \le n$. Note that after displacing $r_i$ interesting disks in $RC(i, a)$, we ``gain'' two units of space, so that we can place $r_i+2$ disks of $S_i$. Note that we cannot ``split'' the disks in $S_i$ to be moved into different row cell gadgets $RC(i, \cdot)$, since we do not ``gain'' an extra space for one additional disk, which is required by the equal displacement argument. Arguing in this manner for the intersecting disks the row gadgets $R(i)$ for every $i$, and with appropriately modified argument for the column gadgets $C(j)$ for every $j$, we can show that exactly $r_i$ interesting disks are displaced from some row-cell gadget $RC(i, a)$, and exactly $c_j$ interesting disks are displaced from some column-cell gadget $CC(j, b)$. 

Again, using the equal displacement argument, we observe that the interesting disks from $RC(i, a)$ must be moved into some pair gadget $PG(i, 1, a, \cdot)$, and the interesting disks from $CC(j, b)$ must be moved into some pair gadget $PG(1, j, \cdot, b)$, displacing interesting disks therein. Note that the absent pair gadgets cannot be used for this purpose, since they contain a packing of disks, and equal displacement argument applies. Similarly, the interesting disks from $PG(i, j, a, b)$ can only be moved in the place of interesting disks in the pair gadgets of horizontally adjacent rows and columns. Since the emptying row and column gadgets are placed at the right and at the bottom respectively, and the way the budget $k$ is chosen, disks from the gadget of cell $(i, j)$ will not be moved into the previous row or into the previous column. Thus, it can be shown that if the original instance of \probGridTiling is a no instance, then there are no values $a_i, b_j$ for $1 \le i, j \le b$, such that interesting disks from $PG(i, j, a_i, b_j)$ displace interesting disks in $PG(i+1, j, a_{i+1}, b_j)$ and $PG(i, j+1, a_i, b_{j+1})$. However, since $\Theta(k^2)$ ``gains'' are required to arrive at a non-intersecting configuration, and each displacement of interesting disks corresponding to a pair gadget achieves a ``gain'' of $1$ disk, there is no solution to the created instance of \probRRearr.
This concludes the proof of the theorem.  
\end{proof}

%

\section{Conclusion and Open Problem}\label{sec:concl} 
In this paper, we initiate the study of the problem of spreading points from the perspective of parameterized complexity and kernelization. We reformulate the problem in terms of moving at most $k$ unit disks by a distance of at most $d$, which we call \probRearr. We design a (partial) polynomial kernel for \probRearr parameterized by $k$ and $d$. Furthermore, we show that this can be transformed into a (true) kernel, assuming the coordinates of the centers of the unit disks are rational numbers with bounded denominators. We complement this result by showing that \probRearr does not admit a polynomial kernel parameterized by $k$ alone, assuming $\classCoNP \subseteq \classNP/\poly$. These results provide a complete picture of the kernelization complexity of \probRearr.

We show that \probRearr is FPT parameterized by $k + d$, by combining the (partial) kernel with a non-trivial subroutine that involves solving a system of polynomial inequalities. It is natural to ask whether the problem is fixed-parameter tractable by the individual parameters $d$ and $k$. Fiala et al.\ \cite{FialaKP05} have shown that \probRearr is \classNP-hard even when $d  = 2$. 
Although the parameterized complexity of \probRearr parameterized by $k$ alone remains open, we make some preliminary progress in this direction, by proving that \probRRearr is  $\classW{1}$-hard when parameterized by $k$. 

\bibliographystyle{siam}
\bibliography{Disks}
\end{document}